\tikzstyle{black node}=[fill=black, draw=black, shape=circle, inner sep=.5]
\tikzstyle{simple node}=[fill={rgb,255: red,226; green,226; blue,226}, draw=black, shape=circle]
\tikzstyle{simple node 2}=[fill={rgb,255: red,226; green,226; blue,226}, draw=black, shape=circle, inner sep=.5]
\tikzstyle{vertex}=[fill=black, draw=black, shape=circle, tikzit draw=red]
\tikzstyle{graph node}=[fill=black, draw=black, shape=circle, inner sep=4]
\tikzstyle{thick line 0}=[-, very thick]
\tikzstyle{thin line 1}=[-, very thin, line width=0.5]
\tikzstyle{dashed line thin}=[-, very thin, dashed]
\tikzstyle{dotted line thin}=[-, very thin, densely dotted, line width=0.5]
\tikzstyle{very thin solid}=[{|-|}, very thin, solid]
\tikzstyle{filled light}=[-, fill={rgb,255: red,199; green,199; blue,199}, tikzit fill={rgb,255: red,217; green,217; blue,217}]
\tikzstyle{filled darker}=[-, fill={rgb,255: red,135; green,135; blue,135}, tikzit fill={rgb,255: red,135; green,135; blue,135}]
\tikzstyle{arrow right}=[->, line width=0.5]
\tikzset{sstate/.style={state, inner sep=2pt, minimum size=7pt, circular drop shadow, fill=white}}
\tikzset{phantom/.style={state, white, inner sep=2pt, minimum size=7pt, fill=white}}
\author{
	Guilherme Duarte \qquad Nelma Moreira  \qquad Rogério Reis\thanks{
			This work was partially supported by CMUP, member of LASI, which is financed by national funds through FCT -- Funda\c c\~ao para a Ci\^encia e a Tecnologia, I.P., under the projects with reference UIDB/00144/2020 and UIDP/00144/2020.
		} \institute{
			CMUP \& DCC,
			Faculdade de Ci\^encias da Universidade do Porto\\
			Rua do Campo Alegre,
			4169-007 Porto,
			Portugal%
		}
		\email{\{guilherme.duarte,nelma.moreira,rogerio.reis\}@fc.up.pt}
		\and
		Luca Prigioniero
		\institute{
			Department of Computer Science,
			Loughborough University\\
			Epinal Way,
			Loughborough LE11 3TU,
			United Kingdom
		}
		\email{l.prigioniero@lboro.ac.uk}
}
\title{Operational State Complexity of Block Languages}
\newtheorem{theorem}{Theorem}
\newtheorem{lemma}[theorem]{Lemma}
\newtheorem{example}{Example}
\newtheorem{corollary}{Corollary}
\newtheorem{proposition}{Proposition}
\newcommand{\Mod}[1]{\ (\mathrm{mod}\ #1)}
\DeclareMathOperator{\REV}{R}
\DeclareMathOperator{\AD}{D}
\DeclareMathOperator{\dsc}{\mathsf{sc}}
\DeclareMathOperator{\nsc}{\mathsf{nsc}}
\newcommand{\letter}{\sigma}
\newcommand{\dfa}{DFA\xspace}
\newcommand{\dfas}{DFAs\xspace}
\newcommand{\nfa}{NFA\xspace}
\newcommand{\nfas}{NFAs\xspace}
\newcommand{\state}{q}
\newcommand{\lang}{\mathcal{L}}
\newcommand{\R}[1]{#1^{\REV}}
\newcommand{\cset}[1]{\{\,#1\,\}}
\newcommand{\bsset}[1]{{\ensuremath\mathcal{B}_{#1}}}
\DeclareMathOperator{\bs}{{\mathsf{B}}}
\DeclareMathOperator{\bsr}{{\mathcal{B}}}
\DeclareMathOperator{\cover}{{\mathcal{C}}}
\DeclareMathOperator{\width}{\mathsf{w}}
\DeclareMathOperator{\rank}{\mathsf{rank}}
\DeclareMathOperator{\indi}{\mathsf{ind}}
\DeclareMathOperator{\pad}{\mathsf{pad}}
\DeclareMathOperator{\maxmindfa}{{\mathsf{MAX}}}
\newcommand\aut[1]{#1}
\newcommand{\m}{t_\ell}
\newcommand{\maxr}{r_\ell}
\newcommand{\bin}[1]{{#1}_{[2]}}
\newcommand{\qee}{\makeatletter
\def\@eqnnum{{\normalfont \normalcolor \hbox{\qed}}}
\makeatother
}
\begin{document}
\maketitle

\begin{abstract}
In this paper we consider block languages, namely sets of words having the same length, and study the deterministic and nondeterministic state complexity of several operations on these languages. Being a subclass of finite languages, the upper bounds of operational state complexity known for finite languages apply for block languages as well. However, in several cases, smaller values were found. Block languages can be represented as bitmaps, which are a good tool to study their minimal finite automata and their operations, as we  illustrate here.
\end{abstract}

\section{Introduction}
In this paper we consider finite languages where all words have the same length, which are called \emph{homogeneous} or \emph{block} languages. Their investigation is mainly motivated by their applications to several contexts such as code theory~\cite{KMR:2018} and image processing~\cite{KarhOkho:2O14,KarhumakiK21}. We will focus on the state complexity of operations~\cite{yu94,gao13}. The \emph{deterministic} (\emph{nondeterministic}) \emph{state complexity} of a regular language~$L$ is the number of states of its minimal complete deterministic (nondeterministic, resp.) finite automaton.

Here, we are interested in operational complexity, that is the size of the model accepting a language resulting from an operation performed on one or more languages. In particular, the \emph{state complexity of an operation} (or \emph{operational state complexity}) on regular languages is the worst-case state complexity of a language resulting from the operation, considered as a function of the state complexities of the operands. As a subclass of finite languages, block languages inherit some properties known for that class, which  differ from   the existing ones for the class of regular languages~\cite{campeanu01}. Due to the fact that, in our case, all words have the same length, there are some gains in terms of state complexity. For example, it is known that the elimination of nondeterminism from an~$n$-state nondeterministic finite automaton for a block language costs~$2^{\Theta(\sqrt{n})}$ in size~\cite{KarhOkho:2O14}, which is smaller than the general case for finite languages.

A block language can be well characterized by its characteristic function which we denote by \emph{bitmap}. In particular, given an alphabet of size~$k$ and a length~$\ell$, a block language can be represented by a binary string of length~$k^\ell$, also called \emph{bitmap}, in which each symbol (or \emph{bit}) indicates whether the correspondent word, according to the lexicographical order, belongs to the language (bit equal to~$1$) or not (bit equal to~$0$). Duarte et al.~\cite{dmpr24a} used this representation as a tool to investigate several properties of block languages, namely how to convert bitmaps into minimal deterministic and nondeterministic finite automata and what are the maximal numbers of states that the resulting automata can have. In this paper, we also use bitmaps for studying the complexity of operations on block languages. Due to the distinguishing property of the length of the words, we study Boolean binary operations over block languages with the same length as well as block complement (i.e., $\Sigma^\ell\setminus L$). Nonetheless, we also consider operations such as concatenation, Kleene star, and Kleene plus, which are not closed for the class of block languages of a given length.

The paper is organized as follows. In the next section we fix notation and review the bitmap representation for block languages. In \cref{sec:sc}, we revise the operational state complexities of basic operations on finite languages. Then, we study the state complexity on block languages for the following operations: reversal, word addition and removal, intersection, union, block complement, concatenation, Kleene star, and plus. In \cref{tab:cblock}, we summarize our results and we conclude the paper in \cref{section:conclusions} by describing further lines of investigation.

\section{Preliminaries}\label{sec:preliminaries}
In this section we review some basic definitions about finite automata and languages and fix notation. Given an \emph{alphabet} $\Sigma$, a \emph{word}~$w$ is a sequence of symbols, and a \emph{language} $L\subseteq \Sigma^\star$ is a set of words on~$\Sigma$. The empty word is denoted by $\varepsilon$. The \emph{(left) quotient} of a language $L$ by a word~$w \in \Sigma^\star$ refers to the set $w^{-1}L = \{w' \in \Sigma^\star \mid ww' \in L\}$. The \emph{reversal} of a word~$w = \letter_0 \letter_1 \cdots \letter_{n-1}$ is denoted by~$\R{w}$ and is obtained by reversing the order of the symbols of~$w$, that is~$\R{w} = \letter_{n-1} \letter_{n-2} \cdots \letter_0$.  The reversal of a language $L$ is $\R{L}=\{\; \R{w}\mid w\in L\;\}$. Given two integers $i,j$ with $i<j$, let $[i,j]$ denote the set of integers from $i$ to $j$, including both $i$ and $j$, namely $\{i, \ldots, j\}$. Moreover, we shall omit the left bound if it is equal to $0$, thus $[j] = \{0,\ldots,j\}$.

A \emph{nondeterministic finite automaton} (\nfa) is a five-tuple $\aut{A}=\langle Q,\Sigma,\delta,I, F\rangle $ where $Q$ is a finite set of states, $\Sigma$ is a finite alphabet, $I\subseteq Q$ is the set of initial states, $F \subseteq Q$ is the set of final states, and~$\delta: Q \times \Sigma \to 2^Q$ is the transition function. We consider the \emph{size} of an \nfa as its number of states. The transition function can be extended to words and sets of states in the natural way. When~$I=\{q_0\}$, we use $I=\state_0$. An \nfa accepting a non-empty language is \emph{trim} if every state is accessible from an initial state and every state leads to a final state. Given a state $q\in Q$, the \emph{right language} of $\state$ is $\lang_\state(\aut{A}) = \{\,w \in \Sigma^\star \mid \delta(\state,w)\cap F \neq \emptyset\,\},$ and the \emph{left language} is $\overleftarrow{\lang}_\state(\aut{A}) = \{\,w \in \Sigma^\star \mid \state \in\delta(I,w)\,\}.$ The \emph{language accepted} by $\aut{A}$ is $\lang(\aut{A})=\bigcup_{\state\in I}\lang_{\state}(\aut{A})$. An \nfa $\aut{A}$ is \emph{minimal} if it has the smallest number of states among all \nfas that accept $\lang(\aut{A})$.

An \nfa is \emph{deterministic} (\dfa) if $|I|=1$ and $|\delta(\state,\letter)|\leq 1$, for all $(\state,\letter) \in Q\times\Sigma$. We can convert an \nfa $\aut{A}$ into an equivalent \dfa $\AD(\aut{A})$ using the well-known subset construction. Two states $\state_1$, $\state_2$ are \emph{equivalent} (or \emph{indistinguishable}) if~$\mathcal{L}_{\state_1}(\aut{A}) = \mathcal{L}_{\state_2}(\aut{A})$. A \emph{minimal} \dfa has no different equivalent states, every state is reachable and it is unique up to isomorphism.

The \emph{state complexity} of a language $L$, $\dsc(L)$, is the size of its minimal~\dfa. The \emph{nondeterministic state complexity} of a language $L$, $\nsc(L)$, is defined analogously. 

A trim \nfa $\aut{A}=\langle Q,\Sigma,\delta,I, F\rangle$ for a finite language of words of size at most $\ell$ is acyclic and ranked, i.e., the set of states  $Q$ can be partitioned into $\ell+1$ disjoint sets $Q_0\cup Q_1 \cup \cdots \cup Q_\ell$, such that for every state $q \in Q_i$, $\aut{A}$ reaches a final state by words of length at most~$i$ ($Q_i=\cset{\state\in Q\mid \forall w\in \Sigma^\star, \delta(\state,w)\in F \implies |w|\leq i}$) and all transitions from states of rank $i$ lead only to states in $i'$, with $i,i'\in[\ell]$ and $i'<i$. We define the \emph{width} of a rank~$i$, namely $\width(i)$, as the cardinality of the set $Q_i$, and the \emph{width} of~$\aut{A}$ to be the maximal width of a rank, i.e., $\width(\aut{A})=\max_{i\in[\ell]}|Q_i|$. A \dfa for a finite language is also ranked but it may have a \emph{dead state}~$\Omega$ which is the only state with a self-loop and without a rank. In a trim acyclic automaton, two states $\state$ and $\state'$ are equivalent if they are both in the same rank, either final or not final, and their transition functions lead to equivalent states, i.e., $\delta(\state,w) \in F \iff \delta(\state',w)\in F$, for each word~$w\in\Sigma^*$. An acyclic \dfa{} can be minimized by merging equivalent states and the resulting algorithm runs in linear time in the size of the automaton (Revuz algorithm, \cite{revuz92,almeida08}).

\subsection{Block Languages and Bitmap Representation} \label{sec:block}
Given an alphabet $\Sigma=\{\letter_0,\ldots,\letter_{k-1}\}$ of size $k>0$ and an integer $\ell > 0$,
a \emph{block language} $L \subseteq \Sigma^\ell$ is a set of words of length $\ell$ over $\Sigma$.
Let $\aut{A}=\langle Q ,\Sigma,\delta,\state_0,F\rangle$ be a $\nfa$ that accepts a block language with a single initial state.
Because all accepted words have the same length,
we can assume that the finite automata for block languages have only one final state
, i.e., $F=\{q_f\}$, for some~$q_f\in Q$.
Moreover, as before, the set of states $Q$ can be partitioned into~$Q_0\cup Q_1 \cup \cdots \cup Q_\ell$ where $Q_i$ is the set of states with rank~$i$ and $\delta(Q_{i},\letter)\subseteq Q_{i-1}$,
where $i=1,\ldots,\ell$ and $\sigma\in\Sigma$. We also have a unique final state in rank~0, that is~$Q_0=F=\{q_f\}$. If $A$ is a \dfa for a block language, then there exists an extra dead state $\Omega$. For each~$\state \in Q_i$ and $\letter\in \Sigma$, either $\delta(\state,\letter)\in Q_{i-1}$ or $\delta(\state,\letter)=\Omega$ (but $\state$ must have at least a transition to~$Q_{i-1}$), for all $i\in [1,\ell]$. 

Câmpeanu and Ho \cite{campeanu04} estimated the maximal number of states of a minimal \dfa accepting a block language.
In the next lemma, we recall that result and related properties.
In \cref{fig:ADFA} the constraints on the widths of the ranks of a minimal \dfa are depicted.

\begin{lemma}\label{lem:maxsc}
	Let $L\subseteq\Sigma^\ell$ be 	block language  over an alphabet of size $k$ and $\ell>0$. Then, we  have
	\begin{enumerate}
		\item \label{lem:maxsc:sc} $\dsc(L)\leq \frac{k^{\ell-r+1}-1}{k-1}+\sum_{i=0}^{r-1}(2^{k^i}-1)+1$, where $r=\min\{n\in[\ell]\mid k^{\ell-n}\leq  2^{k^{n}}-1\}$;
		\item \label{lem:maxsc:rk} $r=\lfloor \log_k\ell\rfloor + 1 + x$, for some $x\in\{-1,0,1\}$;
		\item \label{lem:maxsc:width}
			Let $A$ be a minimal \dfa of maximal size for a block language. Then, $\width(A)=\max\{k^{\ell-r},2^{k^{r-1}}-1\}$, where $\width(r-1)=2^{k^{r-1}}-1$ and $\width(r)=k^{\ell-r}$. Moreover, let $r_{k,\ell}$ be the rank that the width of $\aut{A}$ is reached, either $r$ or~$r-1$.
	\end{enumerate}
\end{lemma}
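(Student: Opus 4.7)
The plan is to bound $\width(i)$ for each rank $i\in[\ell]$ by two natural quantities and then combine them. First, since $\aut{A}$ is minimal, the right languages of the states in $Q_i$ are pairwise distinct non-empty subsets of $\Sigma^i$, so $\width(i)\le 2^{k^i}-1$. Second, every state of rank $i$ is reached from $\state_0$ by some word of length exactly $\ell-i$, and in a \dfa{} distinct words lead only to distinct states, giving $\width(i)\le k^{\ell-i}$. Thus $\width(i)\le\min\{k^{\ell-i},\,2^{k^i}-1\}$, which is the engine driving all three parts.

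For part~(\ref{lem:maxsc:sc}), the key observation is that $k^{\ell-i}$ is strictly decreasing in $i$ while $2^{k^i}-1$ is strictly increasing, so the two sequences cross at a unique index, which is precisely the $r$ of the statement. Summing $\width(i)$ over $i\in[\ell]$, using $2^{k^i}-1$ for $i<r$ and $k^{\ell-i}$ for $i\ge r$, and adding $1$ for the dead state $\Omega$, yields
\[
\sum_{i=0}^{r-1}(2^{k^i}-1)+\sum_{i=r}^{\ell}k^{\ell-i}+1=\sum_{i=0}^{r-1}(2^{k^i}-1)+\frac{k^{\ell-r+1}-1}{k-1}+1,
\]
where the second equality is the geometric sum formula.

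For part~(\ref{lem:maxsc:rk}), I would estimate $r$ by taking logarithms in its defining inequality $k^{\ell-n}\le 2^{k^n}-1$: dropping the $-1$, it reads, roughly, $(\ell-n)\log_2 k\le k^n$. Writing $n=\log_k\ell+c$ this reduces to $k^c\ge \log_2 k$, forcing $c$ to be a small constant, so $r$ sits near $\log_k\ell+1$. A direct check that $n=\lfloor\log_k\ell\rfloor-1$ fails the inequality while $n=\lfloor\log_k\ell\rfloor+2$ satisfies it pins $r$ to a window of length three, yielding $x\in\{-1,0,1\}$.

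For part~(\ref{lem:maxsc:width}), the map $i\mapsto\min\{k^{\ell-i},\,2^{k^i}-1\}$ is unimodal (increasing up to the crossover, decreasing after), so its maximum over $i$ is attained either at $i=r-1$, with value $2^{k^{r-1}}-1$ (since there the reachability bound is weaker), or at $i=r$, with value $k^{\ell-r}$. Hence $\width(\aut{A})\le\max\{k^{\ell-r},\,2^{k^{r-1}}-1\}$, and for a maximal \dfa{} both bounds are realised with equality at those two ranks, so $r_{k,\ell}\in\{r-1,r\}$. The main obstacle will be the logarithmic bookkeeping in part~(\ref{lem:maxsc:rk}): the floor operation and the small-$k$ or small-$\ell$ regimes must be treated carefully to secure exactly the three-term window, whereas parts (\ref{lem:maxsc:sc}) and (\ref{lem:maxsc:width}) then follow almost immediately from the $\min$ bound on $\width(i)$.
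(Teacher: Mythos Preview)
Your approach is essentially identical to the paper's proof sketch: bound $\width(i)$ by $\min\{k^{\ell-i},\,2^{k^i}-1\}$, sum the two regimes split at $r$ for part~(\ref{lem:maxsc:sc}), handle the logarithmic estimate for part~(\ref{lem:maxsc:rk}) (which the paper simply cites from~\cite{dmpr24a}), and read off the maximum of the unimodal $\min$ for part~(\ref{lem:maxsc:width}). One phrasing slip to fix: for the reachability bound you wrote that ``in a \dfa{} distinct words lead only to distinct states,'' which is the wrong direction (that would be injectivity and would give a lower bound); the correct reason $\width(i)\le k^{\ell-i}$ is that $w\mapsto\delta(q_0,w)$ is a \emph{function} from $\Sigma^{\ell-i}$ to states (by determinism) whose image contains $Q_i$.
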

\begin{proof}[Proof (sketch).] 
	The first statement was proven in~\cite[Corollary 10]{campeanu04} and follows from the fact that for each rank $i\in [\ell]$, we have that $\width(i)\leq 2^{k^i}-1$ and $\width(\ell- i) \leq k^i$. Then, for a \dfa to have maximal size we have $\width(i)= 2^{k^i}-1$, for $i\in [r-1]$, and $\width(i)=k^{\ell-i}$, for $i\in [r,\ell]$. Finally, we need to add one for  the dead state. The second statement was proven in~\cite{dmpr24a}. The third statement follows from the first, noticing that $\width(r-1)=2^{k^{r-1}}-1$ and  depending on whether $k^{\ell-r} > 2^{k^{r-1}}-1$ or not. We set $r_{k,\ell}$ to be the rank such that $\width(A)=\width(r_{k,\ell})$.
\end{proof}

\begin{figure}[H]
	\centering\ctikzfig{ADFA}
	\vspace{-5cm}
	\caption{Constraints in the widths of the ranks of a minimal \dfa for a block language. Each rank (except the last and the first ones) is represented by a rectangle. The rightmost state is the dead-state ($\Omega$).
	}
	\label{fig:ADFA}
\end{figure} 

A block language $L$ can be characterized by a word in $\{0,1\}^{k^\ell}$ called \emph{bitmap} and denoted as 
$$\bs(L)= b_0\cdots  b_{k^\ell-1},$$ 
where $b_i=1$ if and only if $i\in [k^\ell-1]$ is the index of $w$ in the lexicographical ordered list of all the words of $\Sigma^\ell$ and the word $w$ is in $L$. In this case, we denote $i$ by~$\indi(w)$. The bitmap of a language can be denoted by $\bs$ when it is unambiguous to which language the bitmap refers to. Reciprocally, given a bitmap $\bs \in \{0,1\}^{k^\ell}$ and an alphabet $\Sigma$ of size $k$, $\lang(\bs)\subseteq \Sigma^\ell$ denotes the language represented by $\bs$.  

Boolean bitwise operations on bitmaps trivially correspond to Boolean set operations on block languages of the same length.
Formally,
given two bitmaps~$\bs_{1}, \bs_{2}\in \{0,1\}^{k^\ell}$,
the bitmap~$\bs_{1}\circ \bs_{2}$ is obtained by carrying out the bitwise operation~$\circ\in\{\lor,\land\}$ between $\bs_{1}$ and~$\bs_{2}$,
while~$\overline{\bs}_1$ is the bitwise complement of $\bs_{1}$.

Duarte et al.~\cite{dmpr24a} studied block languages using bitmaps. In particular, it was shown how to convert bitmaps into minimal deterministic and nondeterministic finite automata.

A bitmap $\bs \in \{0,1\}^{k^\ell}$ of a language $L\subseteq\Sigma^\ell$, for some $\ell>0$, can be split into factors of length $k^i$, for~$i\in [\ell]$. Let~$s^i_j=b_{jk^i}\cdots b_{(j+1) k^i-1}$ denote the $j$-th factor of length $k^i$, for $j \in [k^{\ell-i}-1]$. Since each factor of length $k^i$ can also be split into $k$ factors, $s^i_j$ is inductively defined as:
\begin{align*}
	s^i_j=
	\begin{cases}
		b_j, & \text{if } i = 0, \\
		s^{i-1}_{jk} \cdots s^{i-1}_{(j+1)k-1}, & \text{otherwise.}
	\end{cases}
\end{align*}

Furthermore, let $i\in [\ell]$, $j\in[k^{\ell-i}-1]$, and~$w\in\Sigma^{\ell-i}$ be the word of index $j$ of length~$\ell-i$, in lexicographic order. Then, $s^i_j$ corresponds to the bitmap of $w^{-1}L$.

Given a bitmap~$\bs\in\{0,1\}^{k^\ell}$, let $\bsset{i}$ be the set of factors of~$\bs$ of length~$k^i$, for~$i \in [\ell]$, in which there is at least one bit different from zero, that is, 
	$$\bsset{i} = \{s \in \{0,1\}^{k^i} \mid \exists j \in [k^{\ell-i}-1]: s = s^i_j\neq 0^{k^i} \}\text.$$

\begin{example}\label{ex:bitmap}
Let $\Sigma=\{a,b\}$, $k=2$, and $\ell=4$. Consider  $$L=\{aaaa,aaba,aabb,abab,abba,abbb,babb,bbaa,bbab,bbba\}.$$ The bitmap of $L$ is $\bs(L)=1011011100011110$. Moreover, we have that $s^2_0 = 1011$ is the bitmap of $(aa)^{-1}L=\{aa,ba,bb\}$, $s^3_1=00011110$  the bitmap of $b^{-1}L=\{abb, baa, bab, bba\}$, and $s^4_0=\bs$ the bitmap of~$L$. We also have $\bsset{0}=\{1\}$, $\bsset{1}=\{01,10,11\}$, $\bsset{2}=\{0001, 0111, 1011,$ $1110\}$, $\bsset{3}=\{00011110,10110111\}$, and $\bsset{4}=\{\bs\}$.
\end{example}

The sets $\bsset{i}$ are related to the states of the minimal finite automata representing the block language with bitmap $\bs$, as shown in~\cite{dmpr24a}. We now briefly recall such a result.

Given a bitmap $\bs$ associated with a block language $L \subseteq \Sigma^\ell$, with $|\Sigma|=k$ and~$\ell>0$, one can directly build the minimal \dfa $\aut{A}$ for $L$. Formally, $\aut{A}=\langle Q\cup \{\Omega\},\Sigma,\delta,\bs,\{1\}\rangle\text,$ where the set of states $Q$ correspond to bitmap factors, that is, $Q=\bigcup_{i\in[\ell]} \bsset{i}$; the initial state is the bitmap $\bs$; and the final state is the bitmap factor $1$. The transition function $\delta$ is given by the decomposition of each bitmap factor. Let~$s\in\bsset{i}$, where $s=s_0\cdots s_{k-1}$ and $|s_j|=k^{i-1}$, for $i\in[1,\ell]$ and $j\in[k-1]$. Then, the transition function contains $\delta(s,\letter_j)=s_j$. Moreover, the states in $\bsset{i}$ have rank $i$. The $\dfa$ can be completed considering transitions to $\Omega$ (dead-state) in the usual way.

A similar construction can be used to obtain a minimal \nfa for $L$, where each rank will contain the minimal cover of the sets $\{\bsset{i}\}_{i\in[\ell]}$. The main difference with the deterministic case is that the quotients of the language, corresponding to factors from the bitmap, are represented by a set of states, instead of a single one. For $i\in[\ell]$ and~$s\in\bsset{i}$, a \emph{cover} of $s$ is a set of~$n>0$ binary words~$\{c_0,\ldots, c_{n-1}\}$, where~$|c_j|=s$, for all~$j\in[n-1]$, such that the disjunction of the set equals $s$, that is, $\bigvee_{j\in [n-1]} c_j = s$. Since bitmap factors correspond to block languages, we have $\lang(s)=\bigcup_{j\in[n-1]}\lang(c_j)$. A set~$\cover_i$ of binary words of length~$k^i$ is a cover for the set~$\bsset{i}$ if all the words in~$\bsset{i}$ are covered by~$\cover_i$. For instance, it can be easily noticed that~$\bsset{i}$ covers itself. Moreover, we say that~$\cover_i$ is a \emph{minimal cover} for~$\bsset{i}$ if there is no other set~$\cover_i'$ smaller than~$\cover_i$ that covers~$\bsset{i}$. Then, a minimal \nfa $\aut{A}=\langle Q, \Sigma,\delta, \{\bs\}, \{1\}\rangle$ for~$L$ can be constructed as follows. As indicated, the single final state is the factor $1$. Additionally, we define the function~$\rho:\{0,1\}^\star\to 2^{\{0,1\}^\star}$ that maps factors into covers, where initially we set~$\rho(1)=\{1\}$. Next, for every rank~$i=1,\ldots,\ell$, we consider a minimal cover $\cover_i$ for~$\bsset{i}$, and we set, for every $s\in\bsset{i}$, $\rho(s)=\{c_0,\ldots,c_{n-1}\}\subseteq \cover_i$, such that~$\rho(s)$ covers~$s$. Furthermore, we set~$\cover_i$  as the set of states at rank~$i$ of~$\aut{A}$, and so~$Q=\bigcup_{i\in[\ell]}\cover_i$. The transitions from rank~$i$ to rank~$i-1$ will then be determined in a similar way to the~DFA construction. For each state~$c\in\cover_i$ in rank~$i$, we decompose~$c$ into~$c_0\cdots c_{k-1}$, where~$|c_j|=k^{i-1}$, for every~$j\in[k-1]$, and set~$\delta(c,\sigma_j)=\rho(c_j)$, if only~$c_j\neq0^{k^{i-1}}$, where~$\letter_j\in\Sigma$. We must also guarantee that~$\rho$ is defined for each~$c_j$ or, alternatively, that~$c_j\in\bsset{i-1}$. For that, we need to limit the search space of the cover~$\cover_i$, so that each word in the set is a composition of~$k$ words from~$\bsset{i-1}$ or~$0^{k^{i-1}}$. Formally,~$\cover_i\subseteq (\bsset{i-1}\cup 0^{k^{i-1}})^k\setminus 0^{k^i}$. Also,~$\bsset{\ell}=\{\bs\}$, so the minimal cover for~$\bsset{\ell}$ is itself. This result implies that~$\bs$ will be the single initial state at rank~$0$.


In  this paper, bitmaps will be a useful tool for the study of operational state complexities. Not only languages are easily represented by their bitmaps but also bitwise operations on bitmaps mimic the operations on languages.

\section{Operational Complexity} \label{sec:sc}
In this section we consider operations on block languages using their bitmap representations and study both the deterministic and nondeterministic state complexity of those operations. More precisely, the \emph{operational state complexity} is the worst-case state complexity of a language resulting from the operation, considered as a function of the state complexities of the operands. For instance, the state complexity of the union of two block languages can be stated as follows: given an $m$-state \dfa~$A_1$ and an $n$-state \dfa $A_2$, how many states are sufficient and necessary, in the worst case, to accept the language $L(A_1) \cup  L(A_2)$ by a \dfa? 

An upper bound can be obtained by providing an algorithm that, given \dfas for the operands, constructs a \dfa that accepts the resulting language, and the number of states, in the worst case, of the resulting \dfa is an upper bound for the state complexity of the referred operation. To show that an upper bound is tight, a family of languages (one language, for each possible value of the state complexity) for each operation must be given such that the resulting automata achieve that bound. We can call those families \emph{witnesses} or \emph{streams}.
 
We will mainly consider operations under which the family of block languages is closed, i.e., the resulting language is also a block language. In particular, we will consider the union and intersection of two block languages whose words are of the same length, the concatenation of two arbitrary block languages, the reversal, the complement of block languages closed to the block (i.e.,~$\Sigma^\ell \setminus L$), and word addition and removal from a block language. We will also analyze the Kleene star and plus operations of block languages, which in general do not yield a block language.


Of course, the upper bounds of operational state complexity known for finite languages apply for block languages. In \cref{tab:cfin}, we review some complexity results for finite languages. The first two lines give the bounds for the determinization of an $m$-state \nfa and the asymptotic upper bound of the maximal size of a minimal \dfa, respectively. For the operational state complexities, we consider $|\Sigma|=k$ or $|\Sigma|=f(\overline{m})$ if a growing alphabet is taken into account, $|F_i|=f_i$, and $p_i=|F_i-\{\state_i\}|$, for the $i$-th operand and its set of final states,~$F_i$.

Additionally, we show how to build the bitmap of the language resulting by applying each operation and also present a family of witness languages parameterized by the state complexity of the operands to show that the bounds provided are tight. In general, other additional parameters are the length $\ell$ of the words and the widths of each rank.

\begin{table} 
	\caption{Some complexity bounds for finite languages} 
	\centering
	\begin{tabular}{lcccccc}
		\toprule
		& \multicolumn{1}{c}{Upper bound} & \multicolumn{1}{c}{$|\Sigma|$} & Ref. \\
		\midrule
		\nfa $\to$ \dfa & $\Theta(k^{\frac{m}{1+\log k}})$ & $2$ & \cite{salomaa97} \\
		$\dsc(L)$ & $\frac{k^{\ell+2}}{\ell(k-1)^2\log_2{k}}(1+o(1))$ & $2$ & \cite{campeanu04} \\
		\midrule
		& \multicolumn{1}{c}{$\dsc$} & \multicolumn{1}{c}{$|\Sigma|$} & Ref. & \multicolumn{1}{c}{$\nsc$} & \multicolumn{1}{c}{$|\Sigma|$} & Ref.\\
		\midrule
		$L_1\cup L_2$ & $mn-(m+n)$ & $f(m,n)$ & \cite{han08} & $m+n-2$ & $2$ & \cite{holzer03} \\
		$L_1\cap L_2$ & $mn - 3(m + n) + 12$ & $f(m,n)$ & \cite{han08} & $O(mn)$ & $2$ & \cite{holzer03} \\
		$\overline{L}$ & $m$ & $1$ & & $\Theta(k^{\frac{m}{1+\log k}})$ & $2$ & \cite{holzer03} \\ \addlinespace[2mm]
		\multirow{2}{*}{$L_1L_2$} &$(m-n+3)2^{n-2}-1$, $m+1\geq n$ & $2$ & \cite{campeanu01} & \multirow{2}{*}{$m+n-1$} & \multirow{2}{*}{$2$} & \cite{holzer03} \\
		& $m+n-2$, if $p_1=1$ & $1$ & & \\ \addlinespace[2mm]
		\multirow{2}{*}{$L^\star$} & $2^{m-3}+2^{m-p-2}$, $p\geq 2$, $m\geq 4$ & $3$ & \cite{campeanu01} & \multirow{2}{*}{$m-1$, $m>1$} & \multirow{2}{*}{$1$} & \cite{holzer03} \\ & $m-1$, if $f=1$ & $1$ & & \\ \addlinespace[2mm]
		$L^{R}$ & $O(k^{\frac{m}{1+\log k}})$ & $2$ & \cite{campeanu01} & $m$ & $2$ & \cite{holzer03} \\
		\bottomrule
	\end{tabular}\label{tab:cfin}
\end{table}

\subsection{Reversal}
In the following, given a bitmap $\bs$ of a block language $L\subseteq \Sigma^\ell$, $|\Sigma|=k$, and $\ell>0$, we compute the bitmap for the reversal language $\R{L}$. Recall that the perfect shuffle of length $1$, denoted~$\shuffle_1$, of two words $u=u_0\cdots u_{n-1}$ and $v=v_0\cdots v_{n-1}$ of the same length is obtained by interleaving the letters of~$u$ and~$v$, namely $u\shuffle_1 v= u_0v_0\cdots u_{n-1}v_{n-1}$. If $j$ is a divisor of $n$, the perfect shuffle of length $j$, denoted~$\shuffle_j$, of $u$ and $v$ is the perfect shuffle of blocks of length $j$, that is, 
	$$u \shuffle_j v = u_0\cdots u_{j-1}v_0\cdots v_{j-1} \cdots u_{n-j}\cdots u_{n-1} v_{n-j}\cdots v_{n-1}\text.$$
	Finally, if $|u|=|v|$, we denote $u \shuffle_j v$ by $\shuffle^2_j(uv)$. This can be generalized for any number~$m\geq 2$ of words~$w_0,\ldots, w_{m-1}$ of the same length by considering the perfect shuffle of blocks of length~$j$ taken from each of the $w_i$ words, that is, $\shuffle_j^m(w_0\cdots w_{m-1})$. For $j=1$ and $w_i=w_{i,0}\cdots w_{i,n-1}$, $i\in [m-1]$, one has 
		$$ \shuffle_1^m(w_0\cdots w_{m-1})=w_{0,0}\cdots w_{n-1,0}w_{0,1}\cdots w_{m-1,1}\cdots w_{0,n-1}\cdots w_{m-1,n-1}.$$
Let $\mathsf{R}_0=\bs$ and $\mathsf{R}_{i}=\shuffle^{k}_{k^{i-1}}(\mathsf{R}_{i-1})$, for $i\in [1, \ell -1]$ and $k=|\Sigma|$.
\begin{lemma}
	Let $L\subseteq\Sigma^\ell$ be a block language, for some $\ell>0$. The bitmap for the reversal of $L$, namely $\bs(\R{L})$, is~$\mathsf{R}_{\ell-1}$.
\end{lemma}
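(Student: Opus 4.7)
The plan is to recast $\bs(\R{L})=\mathsf{R}_{\ell-1}$ as a statement about how the shuffle operations permute bit-positions, and then verify that this permutation is precisely the base-$k$ digit-reversal map. Throughout I index positions $0,\ldots,k^\ell-1$ by their base-$k$ expansion, written most-significant-digit first.

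The first observation is straightforward: for a word $w=\letter_{i_0}\cdots\letter_{i_{\ell-1}}\in\Sigma^\ell$, the lexicographic index is $\indi(w)=\sum_{j=0}^{\ell-1} i_j k^{\ell-1-j}$, whose base-$k$ digit tuple is exactly $(i_0,i_1,\ldots,i_{\ell-1})$. Hence $\indi(\R{w})=\sum_{j=0}^{\ell-1} i_j k^j$ is obtained by reversing those digits. Since $\bs(\R{L})$ has a $1$ at position $n$ iff $\bs(L)$ has a $1$ at position $\indi(\R{w})$ (for $w$ the $n$-th word), producing $\bs(\R{L})$ from $\bs(L)$ amounts to applying the base-$k$ digit-reversal permutation to the bit positions; it remains to show that composing $\shuffle^k_{k^0},\shuffle^k_{k^1},\ldots,\shuffle^k_{k^{\ell-2}}$ realises exactly this permutation.

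The main bookkeeping step, which I expect to be the real obstacle, is to describe a single shuffle in digit terms. Viewing $\mathsf{R}_{i-1}$ as the concatenation of its $k$ equal parts of length $k^{\ell-1}$, any position can be uniquely written as $p\cdot k^{\ell-1}+m\cdot k^{i-1}+r$ with $p\in[k-1]$, $m\in[k^{\ell-i}-1]$, $r\in[k^{i-1}-1]$, and unfolding the definition of $\shuffle^k_{k^{i-1}}$ sends it to $m\cdot k^i+p\cdot k^{i-1}+r$. Translated into base-$k$ digits, the input tuple $(p,m_1,\ldots,m_{\ell-i},r_1,\ldots,r_{i-1})$ is mapped to $(m_1,\ldots,m_{\ell-i},p,r_1,\ldots,r_{i-1})$: the leading digit slides into slot $\ell-i$, the $\ell-i$ digits that were in front of it shift one place to the left, and the last $i-1$ digits are untouched. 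Once this description is in place, the rest of the argument is essentially mechanical.

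An induction on $i$ then shows that the bit at position with digit tuple $(d_0,d_1,\ldots,d_{\ell-1})$ in $\mathsf{R}_0=\bs$ sits at position $(d_i,d_{i+1},\ldots,d_{\ell-1},d_{i-1},d_{i-2},\ldots,d_0)$ in $\mathsf{R}_i$; the inductive step is just an application of the single-shuffle description to the current leading digit $d_{i-1}$, which is sent into the slot $\ell-i$ freed by the left shift of $d_i,\ldots,d_{\ell-1}$, while the previously trailing digits $d_{i-2},\ldots,d_0$ remain fixed. Specialising to $i=\ell-1$ yields the tuple $(d_{\ell-1},d_{\ell-2},\ldots,d_0)$, which is precisely the digit reversal identified above, so $\mathsf{R}_{\ell-1}=\bs(\R{L})$, as required.
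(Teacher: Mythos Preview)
Your argument is correct. The overall strategy is the same as the paper's: interpret each shuffle $\shuffle^{k}_{k^{i-1}}$ as a cyclic permutation of the $\ell$ coordinates and argue that their composition is the full reversal permutation. The execution, however, differs. The paper works at the level of words of $\lang(\mathsf{R}_{i-1})$, asserts that the $j$-th shuffle realises the cycle $S_j=(j{-}1\;j\;\cdots\;\ell{-}1)$ on word positions (fixing the first $j{-}1$ symbols and rotating the remaining ones to the right), and then appeals to \cite{Diaconis:1983aa} for the fact that $S_1\cdots S_{\ell-1}$ is the reversal. You instead track the base-$k$ digits of bit positions, derive from the definition that the $i$-th shuffle left-rotates the first $\ell{-}i{+}1$ digits while leaving the last $i{-}1$ digits fixed, and close with a short self-contained induction.

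Your identification of the fixed coordinates is in fact the accurate one. Already for $\ell=3$, $k=2$ one checks from the paper's own example that $\shuffle^{2}_{2}$ sends position $(d_0,d_1,d_2)$ to $(d_1,d_0,d_2)$, i.e.\ it swaps the first two digits and fixes the last, whereas the cycle $S_2=(1\;2)$ claimed in the paper would swap the last two; likewise $\shuffle^{2}_{1}$ acts as a \emph{left} rotation, so that $\lang(\mathsf{R}_1)=\{aaa,aab,bba\}$ rather than the $\{aaa,bab,aba\}$ listed in the example. Both families of cycles happen to compose to the reversal, so the lemma itself is unaffected, but your digit-arithmetic route gets the individual steps right and needs no external citation.
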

\begin{proof}
	Let us prove that~$\lang(\mathsf{R}_{\ell-1})=\R{L}$. For~$i=0$, of course~$\lang(\mathsf{R}_0)=\lang(\bs)=L$. Next, for~$i=1$ we have~$\mathsf{R}_1=\shuffle^k_1(\bs)$. This operation performs the cyclic permutation~$S_1=(0\;1\;\cdots\;\ell-1)$ in each word of~$\lang(\bs)$, that is, each symbol of every word in~$\lang(\bs)$ is shifted one position to their right and the last symbol becomes the first. The following operation,~$\mathsf{R}_2=\shuffle^k_k(\mathsf{R}_1)$, performs the permutation~$S_2=(1\;2\;\cdots\;\ell-1)$ in every word of~$\lang(\mathsf{R}_1)$. Analogously, in this transformation each symbol apart from the first of every word in~$\lang(\mathsf{R}_1)$ is shifted one position to their right but the last symbol becomes the second. In general, the~$j$-th shuffle performs the permutation~$S_j=(j-1\;j\;\cdots\;\ell-1)$, for $j\in [1,\ell-1]$. The composition of the transformations~$S_1,S_2,\ldots,S_{\ell-1}$ ensure that~$\lang(\mathsf{R}_{\ell-1})=\R{L}$ \cite{Diaconis:1983aa}.
\end{proof}

\begin{example}
	Let~$\Sigma=\{a,b\}$ and~$\ell=3$. Let~$\bs=b_0b_1b_2b_3b_4b_5b_6b_7$ be a bitmap for a block language~$L$ such that~$b_0=b_3=b_4=1$, and the remaining bits are~$0$. We have 
	\begin{align*}
		\mathsf{R}_0 & = b_0b_1b_2b_3b_4b_5b_6b_7 \quad \text{ and } \quad \lang(\mathsf{R}_0) = \{aaa, abb, baa\}, \\
		\mathsf{R}_1 & = b_0b_4b_1b_5b_2b_6b_3b_7 \quad \text{ and } \quad \lang(\mathsf{R}_1) = \{aaa, bab, aba\}, \\
		\mathsf{R}_2 & = b_0b_4b_2b_6b_1b_5b_3b_7 \quad \text{ and } \quad \lang(\mathsf{R}_2) = \{aaa, bba, aab\}, \\
	\end{align*}
	and~$\lang(\mathsf{R}_2)=\R{L}$, as desired.
\end{example}

Now we turn to the analyze of the state complexity of this operation. The \dfa for the reversal of a block language $L\subseteq\Sigma^\ell$, with $\ell>0$, is given by reversing each transition on a \dfa for $L$ and then determinising the resulting \nfa. The cost of the determinisation of an $m$-state \nfa for a block language is~$2^{\Theta(\sqrt{m})}$ in size~\cite{KarhOkho:2O14}, so the state complexity of the reversal must also be limited by this bound. 

\begin{corollary}\label{cor:upperscrev}
	Given an $m$-state \dfa for a block language $L$, $2^{O(\sqrt{m})}$ states are sufficient for a \dfa accepting $\R{L}$.
\end{corollary}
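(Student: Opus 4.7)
The plan is to chain together the standard reversal construction with the block-language determinisation bound that has already been quoted in the paper. Let $\aut{A}=\langle Q,\Sigma,\delta,\state_0,\{\state_f\}\rangle$ be a \dfa of size $m$ accepting the block language $L\subseteq\Sigma^\ell$; recall from \cref{sec:block} that $\aut{A}$ has a unique final state $\state_f$ at rank~$0$, a unique initial state $\state_0$ at rank~$\ell$, and possibly a dead state~$\Omega$. I will exhibit an \nfa for $\R{L}$ of size at most $m$, and then invoke the $2^{\Theta(\sqrt{n})}$ upper bound of Karhumäki--Okhotin for \nfa$\to$\dfa{} on block languages.

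First I build the reverse \nfa $\aut{A}^R$ in the standard way: keep the same state set (dropping $\Omega$, which cannot contribute to any accepting computation after reversal), reverse every transition, promote $\state_f$ to the unique initial state, and promote $\state_0$ to the unique final state. Because $\aut{A}$ has a single final state, $\aut{A}^R$ has a single initial state and is therefore a well-formed \nfa for $\R{L}$ with at most $m$ states. Moreover, since every word of $L$ has length exactly $\ell$, the same holds for $\R{L}$, so $\R{L}\subseteq\Sigma^\ell$ is itself a block language and $\aut{A}^R$ accepts a block language.

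Now I apply the subset construction to $\aut{A}^R$. The cited bound of \cite{KarhOkho:2O14} asserts that an $n$-state \nfa for a block language can be determinised into an equivalent \dfa with $2^{\Theta(\sqrt{n})}$ states. Since $\aut{A}^R$ has at most $m$ states, the resulting \dfa for $\R{L}$ has at most $2^{O(\sqrt{m})}$ states, which is exactly the claimed bound.

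There is essentially no obstacle here: the only point that needs a line of care is the handling of the dead state $\Omega$ (it is safe to discard it, because after reversal it would only be reachable from the new initial state $\state_f$ via transitions that never led out of $\Omega$ in the original automaton, hence it cannot reach the new final state $\state_0$). Once this is noted, the statement reduces to a direct invocation of the block-language determinisation bound recalled in the introduction.
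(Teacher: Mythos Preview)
Your proof is correct and follows exactly the approach the paper takes: reverse the \dfa to obtain an \nfa of the same size for the block language $\R{L}$, then invoke the $2^{\Theta(\sqrt{m})}$ determinisation bound of Karhum\"aki--Okhotin. Your extra remarks about discarding~$\Omega$ and about $\aut{A}^R$ having a single initial state are welcome bits of care, but do not change the argument.
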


In the following, we show that this bound is tight. Let $\ell>0$, $\Sigma=\{a,b\}$, $k=2$, and consider \cref{lem:maxsc}. We define a family of languages, parametrized by $\ell$, that attain the maximal state complexity. For convenience, let $r_\ell=r_{2,\ell}$. Then, consider
	$$\maxmindfa_{\ell} = \{ w_1w_2 \mid w_1\in\Sigma^{\ell-{r_\ell}},\, w_2\in\Sigma^{r_\ell},\, i=\indi(w_1),\, j = \indi(w_2),~\text{and~}  (i+1) \wedge 2^j\neq 0\}\text,$$
where we use the notation~$i\land2^j\neq0$ to indicate that the $j$-th least significant bit of the binary representation of $i$, namely $i_{[2]}$, is $1$. Informally, these languages contain words of size~$\ell$ that can be split in~$w_1$ of size~$\ell-\maxr$ and~$w_2$ of size~$\maxr$, with corresponding indices~$i=\indi(w_1)$ and~$j=\indi(w_2)$, such that the $j$-th least significant bit of~$\bin{(i+1)}$ is~$1$.

\begin{proposition}[\cite{dmpr24a}]\label{lemma:bitmap-of-max}
	The minimal \dfa $A$ accepting the language $\maxmindfa_{\ell}$ has maximal size and $\m=\width(A)=\width(r_{\ell})=\max(2^{\ell-r}, 2^{2^{r-1}}-1)$. Moreover, let
		$$ P_{\m,\maxr} = \prod_{i=1}^{\m} \R{\pad(i_{[2]}, 2^{r_\ell})}\text,$$
	where $\pad(s,t)$ is a function that adds leading zeros to a binary string $s$ until its length equals $t$. Then the bitmap of the language~$\maxmindfa_{\ell}$ is given by
	\begin{align*}
		\bs(\maxmindfa_{\ell}) =
		\begin{cases}
			P_{\m,\maxr},                      & \text{if } \m=2^{\ell-r}, \\
			P_{\m,\maxr} \cdot 0^{2^{r_\ell}}, & \text{if } \m=2^{2^{r-1}}-1.	
		\end{cases}
	\end{align*}
\end{proposition}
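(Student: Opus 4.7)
The plan is to verify the bitmap formula directly from the definition of $\maxmindfa_\ell$, and then to compute the width of each rank via the bitmap-to-\dfa construction of \cref{sec:block}, matching them against the upper bounds of \cref{lem:maxsc}.

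First, I would decompose an arbitrary $w\in\Sigma^\ell$ as $w_1w_2$ with $w_1\in\Sigma^{\ell-r_\ell}$ and $w_2\in\Sigma^{r_\ell}$, so that its lexicographic index equals $i\cdot 2^{r_\ell}+j$ for $i=\indi(w_1)$ and $j=\indi(w_2)$. By the condition $(i+1)\wedge 2^j\neq 0$ defining $\maxmindfa_\ell$, the bit of $\bs(\maxmindfa_\ell)$ at that position equals the $j$-th least significant bit of $i+1$. On the other hand, the $i$-th length-$2^{r_\ell}$ block of $P_{\m,r_\ell}$ is $\R{\pad((i+1)_{[2]},2^{r_\ell})}$, whose $j$-th symbol (from the left, $0$-indexed) is precisely that bit. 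Hence the first $\m$ blocks match, and in the second case the appended $0^{2^{r_\ell}}$ accounts for the one extra block at $i+1=\m+1=2^{2^{r-1}}$, whose only set bit lies at position $2^{r-1}$ and hence outside the range $[0,2^{r_\ell}-1]$ of admissible $j$.

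Next, I would use the correspondence between elements of $\bsset{i}$ and states of the minimal \dfa at rank $i$. The $\m$ chunks of $P_{\m,r_\ell}$ are pairwise distinct and non-zero since $i\mapsto i+1$ is injective on $[0,\m-1]$, so $\width(r_\ell)=\m$. For ranks $i>r_\ell$, distinct length-$(\ell-i)$ prefixes yield distinct non-zero factors, giving $\width(i)=2^{\ell-i}$. For ranks $i<r_\ell$ I would apply an inductive halving step: once $\bsset{i}=\{0,1\}^{2^i}\setminus\{0^{2^i}\}$, every non-zero $s$ of length $2^{i-1}$ appears as a half of $s\cdot 0^{2^{i-1}}\in\bsset{i}$, so $\bsset{i-1}=\{0,1\}^{2^{i-1}}\setminus\{0^{2^{i-1}}\}$ as well, of size $2^{2^{i-1}}-1$. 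Each rank thus attains the maximum prescribed by \cref{lem:maxsc}, making the \dfa of maximal size.

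The main obstacle is the base case of the halving argument at rank $r_\ell$, which must be handled separately in the two regimes. In Case 1, where $\m=2^{\ell-r}$ and $\ell-r\ge 2^{r-1}$, the low $2^{r-1}$ bits of $i+1$ for $i+1\in[1,\m]$ cycle through every value in $[0,2^{2^{r-1}}-1]$, so the left halves of the rank-$r$ chunks already enumerate every non-zero binary string of length $2^{r-1}$. In Case 2, $\m=2^{2^{r-1}}-1$ is exactly the number of non-zero binary strings of length $2^{r-1}$, and the chunks themselves (of length $2^{r_\ell}=2^{r-1}$) are those strings by construction. Once the base case is settled, the inductive halving step completes the verification.
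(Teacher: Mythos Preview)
The paper does not supply its own proof of this proposition: it is imported verbatim from~\cite{dmpr24a} and no argument is given here, so there is nothing to compare against directly.

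That said, your proof is correct and self-contained. The bitmap verification via the decomposition $\indi(w)=i\cdot 2^{r_\ell}+j$ is exactly right, and your handling of the trailing $0^{2^{r_\ell}}$ block in Case~2 is justified because minimality of $r$ forces $2^{r-1}=\ell-r+1$ in that regime, so there is precisely one extra block and its only set bit sits at position $2^{r-1}$, outside $[0,2^{r_\ell}-1]$. The rank-width computation via $\bsset{i}$ is also sound; just note a small wording slip: you announce the base of the halving argument ``at rank $r_\ell$'', but in Case~1 (where $r_\ell=r$) you actually establish $\bsset{r-1}=\{0,1\}^{2^{r-1}}\setminus\{0^{2^{r-1}}\}$ one level down, using the left halves of the rank-$r$ chunks. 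This is fine because the width at rank $r$ itself was already handled by the ``distinct non-zero chunks'' observation, so every rank is covered; it would simply read more cleanly to say the halving induction is anchored at rank $r-1$ in both cases.
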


\begin{example}\label{ex:max}
	For $\ell=5$, we have 
	\begin{align*}
		\maxmindfa_5 = &\{aaaaa, aabab, abaaa, abaab, abbba, baaaa, \\
		       	       & baaba, babab, babba, bbaaa, bbaab, bbaba, bbbbb\},
	\end{align*}
	and $\bs(\maxmindfa_{5})= \prod_{i=1}^{8} \R{\pad(i_{[2]}, 4)}=10000100110000101010011011100001$. The minimal \dfa is represented below, where the dead state is omitted.
	
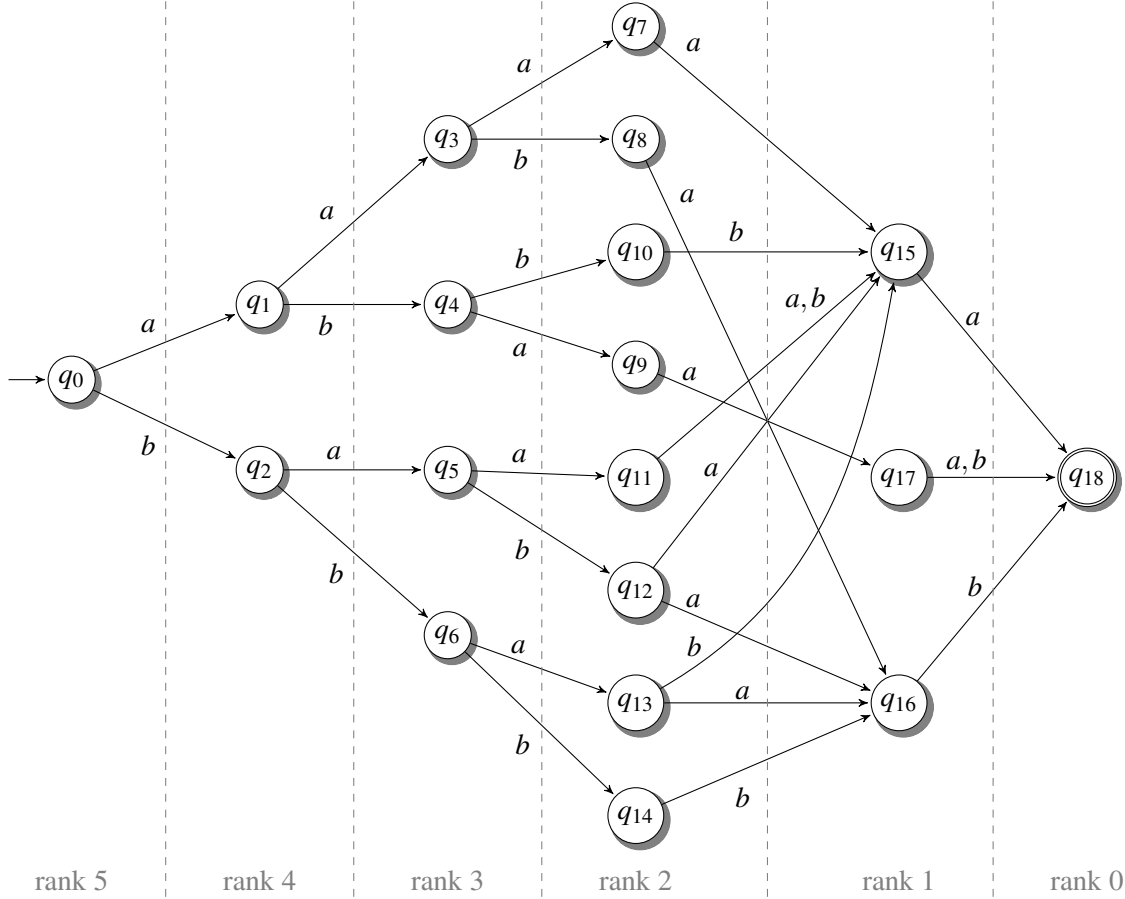
\begin{figure}[ht]
	\centering
	\begin{tikzpicture}[>=stealth', shorten >=1pt, auto, node distance=1.5cm, initial text={}]
     	\node[sstate]           (7)                                      {$q_{7}$};
     	\node[sstate]           (8)  [below of=7]                        {$q_{8}$};
     	\node[sstate]           (10) [below of=8]  				         {$q_{10}$};
     	\node[sstate]           (9)  [below of=10]                       {$q_{9}$};
     	\node[sstate]           (11) [below of=9]				         {$q_{11}$};
     	\node[sstate]           (12) [below of=11]  			         {$q_{12}$};
     	\node[sstate]           (13) [below of=12]          	         {$q_{13}$};
     	\node[sstate]           (14) [below of=13]				         {$q_{14}$};
     	\node[sstate]           (3)  [left of=8,xshift=-1cm]	         {$q_{3}$};
     	\node[sstate]           (4)  [below of=3,yshift=-.7cm]	         {$q_{4}$};
     	\node[sstate]           (5)  [below of=4,yshift=-.7cm]	         {$q_{5}$};
     	\node[sstate]           (6)  [below of=5,yshift=-.7cm]	         {$q_{6}$};
     	\node[sstate]           (1)  [left of=4,xshift=-1cm]	         {$q_{1}$};
     	\node[sstate]           (2)  [left of=5,xshift=-1cm]	         {$q_{2}$};
     	\node[sstate,initial]   (0)  [left of=1,xshift=-1cm,yshift=-1cm] {$q_{0}$};
     	\node[sstate]           (15) [right of=10,xshift=2cm]	         {$q_{15}$};
     	\node[sstate]           (17) [right of=11,xshift=2cm]	         {$q_{17}$};
      	\node[sstate]           (16) [right of=13,xshift=2cm]	         {$q_{16}$};
     	\node[sstate,accepting] (18) [right of=17,xshift=1cm]	         {$q_{18}$};
      	\path[->] 
        	(0)  edge                node                       {$a$}   (1)
          	     edge                node[swap]                 {$b$}   (2)
        	(1)  edge                node[pos=.44]              {$a$}   (3)
          	     edge                node[swap,pos=.3]          {$b$}   (4)
        	(2)  edge                node[pos=.36]              {$a$}   (5)
          	     edge                node[swap]                 {$b$}   (6)
        	(3)  edge                node                       {$a$}   (7)
          	     edge                node[swap,pos=.35]         {$b$}   (8)
        	(4)  edge                node[swap,pos=.47]         {$a$}   (9)
          	     edge                node                       {$b$}   (10)
        	(5)  edge                node[pos=.35]              {$a$}   (11)
          	     edge                node[swap]                 {$b$}   (12)
        	(6)  edge                node[inner sep=1,pos=.27]  {$a$}   (13)
          	     edge                node[swap]                 {$b$}   (14)
        	(7)  edge                node[pos=.1]               {$a$}   (15)
       	 	(8)  edge                node[pos=.1]               {$a$}   (16)
        	(9)  edge                node[pos=.1,inner sep=1]   {$a$}   (17)
        	(10) edge                node[pos=.35]              {$b$}   (15)
        	(11) edge                node[pos=.77,inner sep=.2]   {$a,b$} (15)
        	(12) edge                node[pos=.3,inner sep=1]   {$a$}   (15)	
                 edge                node[pos=.1,inner sep=1]   {$a$}   (16)
        	(13) edge[out=30,in=260] node[pos=.1,inner sep=1]   {$b$}   (15)
          		 edge                node[pos=.38,inner sep=1]  {$a$}   (16)
        	(14) edge                node[swap] [pos=.3]        {$b$}   (16)
        	(15) edge                node[inner sep=1,pos=.3]   {$a$}   (18)
        	(16) edge                node[inner sep=1,pos=0.45] {$b$}   (18)
        	(17) edge                node[inner sep=1,pos=.3]   {$a,b$} (18)  
        	;
		\path[gray]
			(14)++(-90:5ex) node (rank 2) {rank~$2$}
			(rank 2 -| 6)   node (rank 3) {rank~$3$}
			(rank 2 -| 2)   node (rank 4) {rank~$4$}
			(rank 2 -| 0)   node (rank 5) {rank~$5$}
			(rank 2 -| 16)  node (rank 1) {rank~$1$}
			(rank 2 -| 18)  node (rank 0) {rank~$0$}
		;
		\foreach \r [remember=\r as \lastr (initially 0)] in {1,...,5} {
			\draw[dashed,gray] ($(rank \r)!.5!(rank \lastr)$)+(0,11.7cm) -- ($(rank \r.south)!.5!(rank \lastr.south)$);
		}
	\end{tikzpicture}
	\caption{The minimal DFA accepting the language~$\maxmindfa_{\ell}$ for~$\ell=5$. The sink-state is omitted, as well as all transitions from and to it.}
	\label{figure:dfa-witness}
\end{figure}

\end{example}

The reversal of $\maxmindfa_{\ell}$, namely $\R{\maxmindfa_{\ell}}$, has a minimal \dfa whose width is at most $2^{r_\ell+1}$.
\begin{lemma} \label{lemma:reversal-maxrank}
	Let $r$ and $r_\ell$ be defined as before for $\ell>0$ and alphabet size $k=2$. A minimal \dfa $\aut{B}$ such that $\width(\aut{B})\leq 2^{r_\ell+1}$ is sufficient to accept the reversal of the language~$\maxmindfa_{\ell}$.
\end{lemma}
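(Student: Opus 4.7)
The plan is to apply the Myhill--Nerode characterization to $\R{\maxmindfa_\ell}$. A state at rank $i$ of the minimal \dfa for $\R{\maxmindfa_\ell}$ corresponds to a distinct non-empty left quotient $w^{-1}\R{\maxmindfa_\ell}$ for a word $w$ of length $\ell-i$. Using $w^{-1}\R{L}=\R{\{u\mid u\R{w}\in L\}}$ and writing $y=\R{w}$, this reduces the problem to counting distinct non-empty sets $U_y=\{u\in\Sigma^i\mid uy\in\maxmindfa_\ell\}$ as $y$ ranges over $\Sigma^{\ell-i}$, the empty set being absorbed by the dead state. So it suffices to bound this count by $2^{r_\ell+1}$ at every rank $i\in[\ell]$.

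I would split according to whether $\ell-i\geq r_\ell$. In the first case, decompose $y=y_1y_2$ with $|y_2|=r_\ell$, so that $y_2$ plays the role of the ``$w_2$'' block of the definition of $\maxmindfa_\ell$. The membership $uy_1y_2\in\maxmindfa_\ell$ then becomes: bit $\indi(y_2)$ of $\indi(uy_1)+1$ equals $1$. Setting $p=\indi(u)$, $t=\indi(y_1)$, $j=\indi(y_2)$, and using $\indi(uy_1)=2^{|y_1|}p+t$, this reads: bit $j$ of $2^{|y_1|}p+t+1$ is $1$. The key observation is that if $j<|y_1|$ the bit depends only on $t$, so $U_y\in\{\emptyset,\Sigma^i\}$; if $j\geq|y_1|$ it equals bit $j-|y_1|$ of $p+c$, where the carry $c\in\{0,1\}$ is $1$ only for the unique $y_1$ with $\indi(y_1)=2^{|y_1|}-1$. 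A careful enumeration of the resulting ``no-carry'' family $\{u:\text{bit }j'\text{ of }\indi(u)=1\}$ and ``with-carry'' family $\{u:\text{bit }j'\text{ of }\indi(u)+1=1\}$, plus the extreme set $\Sigma^i$, produces at most $2(\ell-r_\ell)$ distinct non-empty $U_y$ in this case.

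In the complementary case $\ell-i<r_\ell$, I would dually decompose $u=u_1u_2$ with $|u_1|=\ell-r_\ell$. Then $u_1u_2y\in\maxmindfa_\ell$ iff bit $2^{\ell-i}\indi(u_2)+\indi(y)$ of $\indi(u_1)+1$ is $1$, so $U_y$ is determined entirely by $\indi(y)$, yielding at most $2^{\ell-i}\leq 2^{r_\ell-1}$ distinct sets. Combining the two cases, the width is bounded by $\max(2(\ell-r_\ell),\,2^{r_\ell-1})$, and both quantities are at most $2^{r_\ell+1}$. The bound $2(\ell-r_\ell)\leq 2^{r_\ell+1}$ follows from $r_\ell\in\{r-1,r\}$ with $r=\min\{n\mid 2^{\ell-n}\leq 2^{2^n}-1\}$, which forces $\ell-r_\ell\leq 2^{r_\ell}$ in each branch.

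The main obstacle will be the precise bit-level accounting in the first case: the two families ``no carry'' and ``with carry'' must be shown essentially disjoint, with the small exceptions at the boundary (for instance, the singleton $\{2^i-1\}$ arising from $j'=i$ in the ``with-carry'' family coincides with a ``no-carry'' set for very small $i$) only reducing the count. The factor $2$ in $2(\ell-r_\ell)\leq 2^{r_\ell+1}$ is exactly what a single family of size $\ell-r_\ell$ would contribute together with its carry-shifted counterpart, and the final arithmetic splits into the two branches $r_\ell=r$ and $r_\ell=r-1$ mirroring the two maxima identified in \cref{lem:maxsc}.
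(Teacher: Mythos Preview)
Your Myhill--Nerode counting is essentially the same engine as the paper's proof: both hinge on the dichotomy ``$j<|y_1|$'' versus ``$j\ge|y_1|$'' and on the special suffix $y_1=b^{|y_1|}$ that generates the carry. The packaging differs. The paper first expands a full binary tree for the top $r_\ell$ levels, obtaining $2^{r_\ell}$ states $q_j$ at rank $\ell-r_\ell$ indexed by $j=\indi(y_2)$, and then shows that each sub-\dfa\ rooted at $q_j$ has width at most~$2$ (one quotient for $y_1\neq b^{|y_1|}$ and one for $y_1=b^{|y_1|}$, your ``no-carry'' and ``with-carry'' families). Multiplying gives $2\cdot 2^{r_\ell}=2^{r_\ell+1}$ directly, with no further arithmetic. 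Your route instead aggregates over all $y$ at once and obtains the sharper intermediate bound $\approx 2(\ell-r_\ell)$, but this forces you to close with the inequality $\ell-r_\ell\le 2^{r_\ell}$.

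That last inequality is correct but more delicate than your sketch suggests in the branch $r_\ell=r-1$: the minimality of $r$ alone gives $\ell-(r-1)\ge 2^{r-1}$, i.e.\ the \emph{reverse} inequality. You need in addition the defining condition for $r_\ell=r-1$, namely $2^{\ell-r}<2^{2^{r-1}}-1$, which yields $\ell-(r-1)\le 2^{r-1}$; combining the two forces equality $\ell-r_\ell=2^{r_\ell}$ in that branch. With this, your ``at most $2(\ell-r_\ell)$'' count (which, after the boundary cases you flag, is really $2i+O(1)\le 2(\ell-r_\ell)$ for $i\le\ell-r_\ell-1$, and $\le 2^{r_\ell}$ at $i=\ell-r_\ell$) does stay within $2^{r_\ell+1}$. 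So your argument goes through, but the paper's ``$2$ per sub-\dfa\ times $2^{r_\ell}$ sub-\dfa{}s'' framing avoids this endgame entirely.
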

\begin{proof}	
	Let $\aut{B}$ be a \dfa such that the last $r_\ell+1$ ranks have maximal width, that is, $2^{\ell-r'}$ states in each rank $r' \in [\ell-r_{\ell}, \ell]$. In particular, the width of the rank $\ell-r_\ell$ is $2^{r_\ell}$. We can order the states in this rank in such way that $\state_j$ is the state whose left language is the reverse of the $j$-th word of $\Sigma^{r_\ell}$, i.e., $\overleftarrow{\lang}(\state_j)=\{\R{w}\}$ where $j=\indi(w)$, for each $j\in[2^{r_\ell}-1]$. 
	 
	 Moreover, let us define the right language of $q_j$ as
		$$\lang_{q_j} = \cset{w\in\Sigma^{\ell-r_\ell}\mid i=\indi(\R{w}), \text{ }(i+1)\wedge2^j\neq 0}\text.$$ 
		Clearly, $\aut{B}$ accepts $\R{\maxmindfa_{\ell}}$. Consider $\aut{B_{q_j}}$, the \dfa $\aut{B}$ with initial state $q_j$, and let us show that $\width(\aut{B_{q_j}}) \leq 2$.
		
	Let $j\in[2^{r_\ell}-1]$, $r' \in [\ell-r_\ell-1]$, and $\bsr_{r'}$ be the set of factors of size $2^{r'}$ of the bitmap of $\R{\maxmindfa_{\ell}}$. Let $s\in\bsr_{r'}$ and $w_1\in\Sigma^{\ell-r_\ell-r'}$ such that $s$ represents the language $w_1^{-1}\lang_{q_j}$. From the definition of $\lang_{q_j}$, $w_1w_2\in \lang_{q_j}$ if $(i+1)\wedge2^j\neq 0$, where $w_2\in\Sigma^{r'}$ and $i=\indi(\R{w_2}\R{w_1})$. We will now show that the number of states on the rank $r'$ of $\aut{B_{q_j}}$ is bounded by $2$, by arguing that $|\bsr_{r'}| \leq 2$. Consider the following two cases:
		\begin{enumerate}
			\item $j < \ell-r_\ell-r'$: in this case, it is sufficient to check if $(i'+1)\wedge2^j\neq0$, where $i'=\indi(\R{w_1})$, since $|\R{w_1}|=\ell-r_\ell-r'$. Then, either $i'+1$ has its $j$-th bit equal to $1$, which implies that $s=1\cdots 1$, or has not, implying that $s=0\cdots 0$, so $s\notin \bsr_{r'}$. Therefore, $|\bsr_{r'}| = 1$.

			\item $j \ge \ell-r_\ell-r'$: if $w_1\in\Sigma^{\ell-r_\ell-r'}\setminus \{b^{\ell-r_\ell-r'}\}$, the binary representation of $i'+1$, with $i'=\indi(\R{w_1})$, requires at most $\ell-r_\ell-r'$ bits. Then, the $j$-th bit of $i$, corresponds to the ($j-\ell+r_\ell+r'$)-bit of~$w_2$. Hence, $u^{-1}\mathcal{L}_{q_j} = v^{-1}\mathcal{L}_{q_j}$, for all $u, v \in \Sigma^{\ell-r_\ell-r'}\setminus \{b^{\ell-r_\ell-r'}\}$. On the other hand, if $w_1=b^{\ell-r_\ell-r'}$, it results on a different quotient, since $\ell-r_\ell-r'+1$ bits are needed for the binary representation of $i'+1$. Therefore, $|\bsr_{r'}| = 2$.
		\end{enumerate}
	This result implies that~$\width(\aut{B}_{q_j})=2$. As a consequence, the width of the ranks~$r'\in [\ell-r_\ell-1]$ of~$\aut{A}$ are bounded by~$2^{\maxr+1}$, as desired.	
\end{proof}

Then, we have the following bound on the state complexity of the reversal of a language.
\begin{theorem}
	Let $L\subseteq\Sigma^\ell$, $\ell>0$, such that $\dsc(L)=m$. Then, $\dsc(\R{L})\in 2^{\Theta(\sqrt{m})}$. 
\end{theorem}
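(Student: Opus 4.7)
The upper bound $\dsc(\R L) \in 2^{O(\sqrt m)}$ is already stated in \cref{cor:upperscrev}, so my plan focuses on exhibiting a witness family realising the matching lower bound $2^{\Omega(\sqrt m)}$. The idea is to use the reverses of the maximal-size languages of \cref{lemma:bitmap-of-max}: take $L_\ell := \R{\maxmindfa_\ell}$ over a binary alphabet, for $\ell>0$. Since reversal is involutive, $\R{L_\ell}=\maxmindfa_\ell$, so \cref{lemma:bitmap-of-max} combined with \cref{lem:maxsc} gives directly $\dsc(\R{L_\ell}) = \Theta(2^\ell/\ell)$, which will play the role of the ``large'' side of the bound.

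The central computation is to bound $m_\ell := \dsc(L_\ell)$ from above using \cref{lemma:reversal-maxrank}, which supplies a DFA $\aut B$ for $L_\ell$ of width at most $2^{r_\ell+1}$. I would sum its states rank by rank: the top $r_\ell+1$ ranks (ranks $\ell-r_\ell,\ldots,\ell$) carry the geometric sequence $1,2,\ldots,2^{r_\ell}$, totalling $2^{r_\ell+1}-1$ states, while each of the remaining $\ell-r_\ell$ ranks contributes at most $2^{r_\ell+1}$ states, plus one dead state. Invoking \cref{lem:maxsc} yields $r_\ell=\Theta(\log\ell)$ and hence $2^{r_\ell}=\Theta(\ell)$, so the totals collapse to $m_\ell = O(\ell\cdot 2^{r_\ell})=O(\ell^2)$.

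Inverting then gives $\ell=\Omega(\sqrt{m_\ell})$, and since $\log\ell = o(\sqrt{m_\ell})$,
$$\dsc(\R{L_\ell}) = \Theta(2^\ell/\ell) = 2^{\ell - O(\log\ell)} = 2^{\Omega(\sqrt{m_\ell})}.$$
Together with \cref{cor:upperscrev} this gives $\dsc(\R{L_\ell})\in 2^{\Theta(\sqrt{m_\ell})}$ along the sequence $\ell\to\infty$, which is the worst-case statement of the theorem.

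I expect the main obstacle to be pure bookkeeping rather than a conceptual leap: translating the per-rank width bound of \cref{lemma:reversal-maxrank} into a clean total-states bound, tracking carefully the estimate $2^{r_\ell}=\Theta(\ell)$, and inverting the relation between $m_\ell$ and $\ell$ so that the exponential in $\ell$ converts into $2^{\Omega(\sqrt{m_\ell})}$. No combinatorial ingredient beyond those already developed in \cref{lemma:bitmap-of-max} and \cref{lemma:reversal-maxrank} appears to be required.
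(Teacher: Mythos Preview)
Your proposal is correct and follows essentially the same route as the paper: take $L_\ell=\R{\maxmindfa_\ell}$ as the witness, use \cref{lemma:reversal-maxrank} together with $r_\ell=\Theta(\log\ell)$ to get $\dsc(L_\ell)=O(\ell^2)$, and use \cref{lemma:bitmap-of-max}/\cref{lem:maxsc} to get $\dsc(\R{L_\ell})=2^{\Omega(\ell)}$, then invert. The paper carries out the same two estimates with slightly more explicit bookkeeping (introducing an auxiliary cutoff rank $r'$ to account for the bottom ranks where $2^{2^i}-1<2^{r_\ell+1}$), but this refinement does not change the asymptotics and your coarser per-rank bound already suffices.
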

\begin{proof}
	By \cref{cor:upperscrev} we have that $2^{O(\sqrt{m})}$ states are sufficient for a \dfa accepting $\R{L}$. Now, we prove that this cost is necessary in the worst case.
	
	Let $\aut{A}$ and $\aut{B}$ be the minimal \dfas for $\maxmindfa_{\ell}$ and $\R{\maxmindfa_{\ell}}$ with set of states $Q$ and $P$, respectively. For this proof, assume that~$2^{\ell-r} > 2^{2^{r-1}}-1$, which implies that $r_\ell=r$. A similar proof follows, otherwise.
	
	 By \cref{lemma:bitmap-of-max}, $\aut{A}$ is of maximal size and its width is exactly~$2^{\ell-r}$.
	 Therefore, the number of states of $\aut{A}$ is
	\begin{align*}
		|Q| & = \sum_{i=0}^{r_\ell-1}(2^{2^i}-1) + \sum_{i=r_\ell}^{\ell}2^{\ell-i} = \sum_{i=0}^{r_\ell-1}2^{2^i} - r_\ell + 2^{\ell-r_\ell+1} - 1 \\
		    & \geq 2^{2^{r_\ell}-1} - (r_\ell+1) \tag*{$2^{\ell-r_\ell+1} > 0$} \\
		    & \geq 2^{2^{\log_2\ell+2}-1} - (\log_2\ell+3) \tag*{$2^{2^{r_\ell}-1} \gg r_\ell$, $r=r_\ell$, and $r<\log_2\ell+2$} \\
			& \geq 2^{4\ell-1} - (\log_2\ell+3) \in 2^{\Omega(\ell)}. \\
	\end{align*}
	By \cref{lemma:reversal-maxrank}, the width of $\aut{B}$ is bounded by $2^{r_\ell+1}$. Moreover, the width of each rank~$i\in[r-1]$ of~$\aut{B}$ is also bounded by~$2^{2^i}-1$, as we have seen in~\cref{lem:maxsc}. Then, let~$r'=\min\{n\in[\ell] \mid 2^{r_\ell+1} \leq 2^{2^n}-1\}$. In particular, we have
	\begin{align*}
		2^{r_\ell+1} > 2^{2^{r'-1}}-1 & \implies 2^{r_\ell+1} \geq 2^{2^{r'-1}} \implies r_\ell+1 \geq 2^{r'-1} \\
									  & \implies \log_2\ell + 3 \geq 2^{r'-1} \tag*{$r=r_\ell$ and $r<\log_2\ell+2$} \\
									  & \implies r' \leq \log_2(\log_2\ell +3) + 1.
	\end{align*}
	
	The value of~$r'$ tells us how many ranks in~$\aut{B}$ can achieve the maximal width of~$2^{r_\ell+1}$. Then, the number of states of $\aut{B}$ is bounded by
	\begin{align*}
		|P| & \leq \sum_{i=0}^{r'-1}(2^{2^i}-1) + 2^{r_\ell+1}(\ell-r_\ell-r') + \sum_{i=\ell-r_\ell}^{\ell}2^{\ell-i} \\
			& \leq 2^{2^{r'}} - r' + 2^{r_\ell+1}(\ell-r_\ell-r'+1) - 1 \\ 
			& \leq 2^{2^{\log_2(\log_2\ell +3) + 1}} + 2\cdot2^{r_\ell}(\ell+1) \tag*{$r' \leq \log_2(\log_2\ell +3) + 1$} \\
		    & \leq 2^6\cdot2^{\log_2\ell^2} + 2^3\cdot2^{\log_2\ell}(\ell+1) \tag*{$r=r_\ell$ and $r<\log_2\ell+2$} \\
			& \leq 2^6\ell^2 + 2^3(\ell^2+\ell) = O(\ell^2). \\
	\end{align*}
	
	Thus, given $L=\R{\maxmindfa_{\ell}}$ with $\dsc(L)=m$, we have that $\dsc(\R{L})=2^{\Omega\left(\sqrt{m}\right)}$, as desired.
\end{proof}

The \nfa for the reversal of a language $L\subseteq\Sigma^\ell$ is given by reversing the transitions on the \nfa for $L$. In fact, the nondeterministic state complexity of the reversal of a finite language coincides with the nondeterministic state complexity of the language, so no better result can be obtained for the block languages.
\begin{theorem}
	Let $L\subseteq\Sigma^\ell$, for some $\ell>0$. Then, $\nsc(\R{L}) =\nsc(L)$.
\end{theorem}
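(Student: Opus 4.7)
The plan is to establish both inequalities $\nsc(\R{L})\leq\nsc(L)$ and $\nsc(L)\leq\nsc(\R{L})$ via the standard reversal construction on NFAs, and observe that the latter yields equality by symmetry.

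First, I would take a minimal \nfa $\aut{A}=\langle Q,\Sigma,\delta,\{\state_0\},\{\state_f\}\rangle$ for $L$ with exactly $\nsc(L)$ states, where, as noted in \cref{sec:block}, we may assume a single initial and a single final state since $L$ is a block language. From $\aut{A}$, construct the reversed automaton $\R{\aut{A}}=\langle Q,\Sigma,\R{\delta},\{\state_f\},\{\state_0\}\rangle$ by swapping the roles of the initial and final state and setting $\R{\delta}(\state,\letter)=\{\state'\in Q\mid \state\in\delta(\state',\letter)\}$. A straightforward induction on the length of an input word shows $\lang(\R{\aut{A}})=\R{L}$, and since $\R{\aut{A}}$ still has a single initial and a single final state (and thus remains a well-formed \nfa for a block language in the sense of \cref{sec:block}), we immediately obtain $\nsc(\R{L})\leq |Q|=\nsc(L)$.

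Next, I would apply exactly the same construction to a minimal \nfa for $\R{L}$, yielding an \nfa for $\R{(\R{L})}=L$ whose size equals $\nsc(\R{L})$. This gives $\nsc(L)\leq \nsc(\R{L})$, and combining with the previous inequality yields the desired equality.

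There is no real obstacle here: the argument is entirely symmetric and independent of the block-language setting beyond the remark that the single-initial/single-final structure is preserved under reversal. The only subtlety worth flagging in the write-up is that, unlike the deterministic case (where reversal can blow up substantially, as proven just above), nondeterminism is insensitive to the orientation of the transitions of an acyclic ranked automaton, so the minimal \nfa sizes for $L$ and $\R{L}$ must coincide exactly rather than merely up to an asymptotic bound.
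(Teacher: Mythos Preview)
Your proof is correct. Both you and the paper obtain the inequality $\nsc(\R{L})\leq\nsc(L)$ from the standard reversal construction; the difference lies in how the other direction is handled. You apply the same construction to $\R{L}$ and use that reversal is an involution to conclude $\nsc(L)\leq\nsc(\R{L})$, hence equality for \emph{every} block language $L$. The paper instead exhibits the witness family $L_\ell=\{a^\ell\}$ (which satisfies $L_\ell=\R{L_\ell}$) to show the upper bound is tight, in keeping with the operational-state-complexity idiom used throughout the section. Your symmetry argument proves the theorem exactly as stated (a universal equality), whereas the paper's witness, read literally, only certifies that the bound $\nsc(\R{L})\leq\nsc(L)$ is attained for some $L$; the universal equality then follows only once one notices the involution argument you spell out. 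So your route is slightly more direct for the statement as written, while the paper's is consistent with its surrounding presentation of upper bounds plus witnesses.
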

\begin{proof}
	The construction above shows that $\nsc(\R{L}) \leq \nsc(L)$. The following family of languages shows that it is tight. Let $L_\ell = \{a^\ell\}$, with $\ell>0$. We have both that $\nsc(L_\ell) = \ell+1$ and $L_\ell=\R{L_\ell}$.
\end{proof}

\subsection{Word Addition and Word Removal}
Consider a language $L\subseteq\Sigma^\ell$, for some $\ell>0$, over an alphabet of size $k$. The operations of adding or removing a word $w \in \Sigma^\ell$ from the language,  $L\setminus\{w\}$ and $L\cup\{w\}$, respectively, correspond to the not operation on the~$\indi(w)$-th bit of $\bs(L)$. From that observation, we can estimate the state complexity of these operations.

\begin{theorem} \label{theorem:sc-word-op}
	Let $L\subseteq \Sigma^\ell$ be a block language with $|\Sigma|=k$ and~$\ell>0$, such that $\dsc(L) = m$. Let~$\oplus \in \{\setminus, \cup\}$, $w \in \Sigma^\ell$, and~$L'=L \oplus \{w\}$. Then, $n - (\ell-1) \leq \dsc(L') \leq m + (\ell-1)$. 
\end{theorem}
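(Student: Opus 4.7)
The approach is to exploit the one-to-one correspondence between states of the minimal \dfa and non-zero bitmap factors, recalled in \cref{sec:block}: we have $\dsc(L) = 1 + \sum_{i=0}^{\ell}|\bsset{i}|$, where the $+1$ accounts for the dead state. Since adding or removing a single word $w$ flips exactly one bit of the bitmap, namely the bit of index $\indi(w)$ (or no bit at all in the trivial cases $w\in L$ with $\oplus=\cup$ or $w\notin L$ with $\oplus=\setminus$, where $L' = L$), the whole argument reduces to tracking how one bit flip propagates through the ranks.

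At each rank $i$, the multiset of factors $(s^i_j)_j$ changes only at position $j_i = \lfloor \indi(w)/k^i\rfloor$, where one occurrence of the old factor $s^i_{j_i}$ is replaced by a new factor $\tilde s^i_{j_i}$ differing from it in exactly one bit. Passing to the set of distinct non-zero representatives, $\bsset{i}$ can lose at most one element (the old factor, only if its multiplicity in the multiset was $1$) and gain at most one (the new factor, only if it is non-zero and not already present). Hence $\bigl||\bsset{i}| - |\bsset{i}'|\bigr| \leq 1$ for every $i$, where $\bsset{i}'$ denotes the corresponding set computed from $\bs(L')$.

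Assuming $L$ and $L'$ are both non-empty, the extremal ranks contribute nothing: at rank $0$, $\bsset{0} = \bsset{0}' = \{1\}$, and at rank $\ell$, although the whole bitmap changes, both $\bsset{\ell} = \{\bs(L)\}$ and $\bsset{\ell}' = \{\bs(L')\}$ are singletons. Summing the possible $\pm 1$ contributions across the $\ell - 1$ intermediate ranks yields $|\dsc(L') - \dsc(L)| \leq \ell - 1$, which simultaneously delivers both the upper and the lower bound. The main delicate point is precisely the treatment of rank $\ell$: even though the top-level factor changes, it does not add or remove a state, since the set $\bsset{\ell}$ remains a singleton on both sides.
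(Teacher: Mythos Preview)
Your proof is correct and follows essentially the same approach as the paper: both track how a single bit flip in the bitmap changes the set of non-zero factors $\bsset{i}$ by at most one at each rank, and both exclude the extremal ranks $0$ and $\ell$ (the paper phrases this as ``the number of ranks neither initial nor final''). Your presentation is in fact slightly more explicit than the paper's---you spell out why ranks $0$ and $\ell$ contribute nothing and you flag the non-emptiness assumption---whereas the paper packages the rank-by-rank analysis into a four-case distinction on membership of $s^i_j$ and $t^i_j$ in the respective factor sets; but the underlying argument is identical.
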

\begin{proof}
	Let $\bs(L)$ and $\bs(L')$ be the bitmaps of $L$ and $L'$, respectively. Let us assume that the operation~$\oplus$ results in a different language. Then, the bitmaps~$\bs(L')$ and~$\bs(L)$ differ exactly for one bit. Let $i\in[\ell]$. Then, there is exactly one $j \in [k^{\ell-i}-1]$ such that $s^i_j \neq {t^i_j}$, where $s^i_j$ and ${t^i_j}$ denote the $j$-th bitmap factor of size $k^i$ of $\bs(L)$ and $\bs(L')$, respectively. Also, recall $\bsr(L)_i$ (resp. $\bsr(L')_i$), the set of factors of size $k^i$ of $\bs(L)$ (resp. $\bs(L')$). Then, there are four possible cases:
	\begin{enumerate}
		\item $s^i_j \in \bsr(L')_i$ and $t^i_j \in \bsr(L)_i$: the two sets have the same size;
		\item $s^i_j \in \bsr(L')_i$ and $t^i_j \notin \bsr(L)_i$: $\bsr(L')_i$ has one more element than $\bsr(L)_i$; \label{item:sc-word-op-increase}
		\item $s^i_j \notin \bsr(L')_i$ and $t^i_j \in \bsr(L)_i$: $\bsr(L)_i$ has one more element than $\bsr(L')_i$; \label{item:sc-word-op-decrease}
		\item $s^i_j \notin \bsr(L')_i$ and $t^i_j \notin \bsr(L)_i$: the two sets have the same size.
	\end{enumerate}
Then, the difference on the number of states from a \dfa which accepts the language $L'$ and the \dfa which accepts $L$ is bounded by $\ell-1$, which is the number of ranks neither initial nor final.
\end{proof}

These bounds also extend to the nondeterministic state complexity, as proved in the following result.
\begin{theorem} \label{theorem:nsc-word-op}
	Let $L\subseteq \Sigma^\ell$ be a block language with $|\Sigma|=k$ and~$\ell>0$, such that $\nsc(L) = m$. Let~$\oplus \in \{\setminus, \cup\}$, $w \in \Sigma^\ell$, and~$L'=L \oplus \{w\}$. Then, $m - (\ell-1) \leq\nsc(L') \leq m + (\ell-1)$.
\end{theorem}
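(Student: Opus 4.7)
The plan is to mirror the proof of Theorem~\ref{theorem:sc-word-op} for the deterministic case, replacing the rank-by-rank counting of distinct bitmap factors by the rank-by-rank counting of elements in a minimum cover, as in the NFA construction recalled in Section~\ref{sec:block}. Since block languages admit a minimal NFA with a single initial state at rank $\ell$ and a single final state at rank $0$, any discrepancy between $\nsc(L)$ and $\nsc(L')$ must be concentrated in the $\ell-1$ intermediate ranks.

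First, I would repeat the four-case analysis from the proof of Theorem~\ref{theorem:sc-word-op}: since $\bs(L)$ and $\bs(L')$ differ in exactly one bit, at every rank $i\in[1,\ell-1]$ exactly one factor $s^i_j$ flips a single bit, so $\bsset{i}$ for $L'$ is obtained from $\bsset{i}$ for $L$ by adding one element, removing one element, replacing one element, or being unchanged.

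Next, I would prove that whenever two sets $S$ and $S'$ of binary strings of length $k^i$ differ by at most one element, the sizes of their minimum covers differ by at most~$1$: if $S' = S\cup\{t\}$, then a minimum cover of $S$ together with $\{t\}$ covers $S'$; if $S' = S\setminus\{t\}$, then any cover of $S$ is already a cover of $S'$; the replacement case combines the two arguments. Summing the per-rank differences over the $\ell-1$ intermediate ranks yields the upper bound $\nsc(L') \leq m + (\ell-1)$. The lower bound $m - (\ell-1) \leq \nsc(L')$ follows by symmetry: $L$ can be recovered from $L'$ by the inverse operation on the same word $w$, so applying the preceding argument with the roles of $L$ and $L'$ swapped gives $m = \nsc(L) \leq \nsc(L') + (\ell-1)$.

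The main obstacle is the side condition that a valid cover at rank $i$ must lie in $(\bsset{i-1}\cup\{0^{k^{i-1}}\})^k \setminus \{0^{k^i}\}$, a constraint that itself depends on the language and hence changes between $L$ and $L'$. The key observation to overcome this is that any newly introduced cover element (the factor $t$ in the addition or replacement case) is, by construction, itself a bitmap factor of $\bs(L')$ at rank $i$, and therefore automatically decomposes into blocks from $\bsset{i-1}$ of $L'$ together with $0^{k^{i-1}}$; this needs to be verified inductively from rank~$1$ upward, so that the modified cover remains admissible at every rank.
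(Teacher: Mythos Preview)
Your overall approach coincides with the paper's: both argue rank by rank that the minimum-cover size changes by at most one, then sum over the $\ell-1$ intermediate ranks and invoke the $L\leftrightarrow L'$ symmetry for the other inequality. The paper's proof is in fact terser than yours---it cites cases~\ref{item:sc-word-op-increase} and~\ref{item:sc-word-op-decrease} of Theorem~\ref{theorem:sc-word-op} and asserts the cover-size bound in two sentences, without mentioning the admissibility constraint at all.

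The constraint issue you flag in your final paragraph is genuine, and your treatment is already more careful than the paper's. However, your proposed resolution is incomplete. You verify that the \emph{newly added} cover element $t=t^i_{j_i}$ lies in $(\bsset{i-1}(L')\cup\{0^{k^{i-1}}\})^k$, but you do not check that the \emph{retained} elements of the old cover $\cover_i(L)$ still satisfy the constraint for $L'$. The problem is that when the rank-$(i{-}1)$ factor $s^{i-1}_{j_{i-1}}$ occurred exactly once in $\bs(L)$, it vanishes from $\bsset{i-1}(L')$; any element of $\cover_i(L)$ that used it as a sub-block is then no longer admissible, and nothing in your inductive sketch repairs such elements. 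To close this gap you would either need a further argument showing those elements can be modified without increasing the cover by more than one, or---more cleanly---bypass the cover machinery altogether: for $\oplus=\cup$ simply adjoin an $(\ell{-}1)$-state path reading $w$ from $q_0$ to $q_f$; for $\oplus=\setminus$ redirect the initial state's transition on the first letter of $w$ into a fresh $(\ell{-}1)$-state chain that copies the original targets on off-$w$ letters but suppresses the final $w$-letter transition. Either direct construction yields the upper bound without ever touching the admissibility constraint, after which your symmetry argument finishes the proof.
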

\begin{proof}
	Consider the proof of Theorem~\ref{theorem:sc-word-op} and its notation. If the case~\ref{item:sc-word-op-increase} verifies, that is, $|\bsr(L')_i| = |\bsr(L)_i|+1$, then the cover will require at most one more segment to cover the new set. Analogously, the size of the cover for $\bsr(L')_i$ can be smaller by one than the cover for $\bsr(L)_i$, for the case~\ref{item:sc-word-op-decrease}.
\end{proof}

The upper bounds from Theorems~\ref{theorem:sc-word-op} and~\ref{theorem:nsc-word-op} are reached as stated in the following theorem.

\begin{theorem}\label{theorem:witness-word-op}
	The bounds given in \cref{theorem:sc-word-op,theorem:nsc-word-op} are tight.
\end{theorem}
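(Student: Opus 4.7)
The plan is to exhibit explicit witness families over a binary alphabet that simultaneously realise both extremes of the bounds in \cref{theorem:sc-word-op,theorem:nsc-word-op} for each of the operations~$\oplus\in\{\cup,\setminus\}$. Fix $\Sigma=\{a,b\}$ and $\ell>0$, and consider the two language pairs $L_\ell=\{a^\ell\}$ with $L_\ell'=L_\ell\cup\{b^\ell\}=\{a^\ell,b^\ell\}$, and $K_\ell'=\Sigma^\ell$ with $K_\ell=K_\ell'\setminus\{a^\ell\}$.

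First I would compute the state complexities of all four languages. For $L_\ell$ and $K_\ell'$ the minimal \dfa is a chain of $\ell+1$ ranked states plus a dead state, giving $\dsc(L_\ell)=\dsc(K_\ell')=\ell+2$; dropping the dead state yields the minimal \nfa, so $\nsc(L_\ell)=\nsc(K_\ell')=\ell+1$. For $L_\ell'$ and $K_\ell$, at each intermediate rank $i\in[1,\ell-1]$ exactly two pairwise distinguishable quotients are reachable---namely $\{a^i\}$ and $\{b^i\}$ for $L_\ell'$, and $\Sigma^i\setminus\{a^i\}$ and $\Sigma^i$ for $K_\ell$---yielding $\dsc(L_\ell')=\dsc(K_\ell)=1+2(\ell-1)+1+1=2\ell+1$. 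For the nondeterministic counterpart I would apply the cover-based construction of \cref{sec:block} and observe that $\bsset{i}$ at each intermediate rank consists of two distinct non-zero factors ($10^{2^i-1}$ and $0^{2^i-1}1$ in the case of $L_\ell'$, and $01^{2^i-1}$ and $1^{2^i}$ in the case of $K_\ell$), which admit no cover of cardinality one; hence $\nsc(L_\ell')=\nsc(K_\ell)=2\ell$.

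Once these four state complexities are in hand, tightness of the bounds follows immediately. The upper bound $m+(\ell-1)$ for $\oplus=\cup$ is witnessed by the pair $(L_\ell,b^\ell)$, since $\dsc(L_\ell\cup\{b^\ell\})-\dsc(L_\ell)=\ell-1$; the analogous witness for $\oplus=\setminus$ is $(K_\ell',a^\ell)$. Swapping the roles of $L$ and $L\oplus\{w\}$ in each pair---that is, taking $(L_\ell',b^\ell)$ under~$\setminus$ and $(K_\ell,a^\ell)$ under~$\cup$---yields the lower bound $m-(\ell-1)$ for each of the two operations. The same two pairs serve as witnesses for the corresponding nondeterministic bounds.

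The main obstacle will be the nondeterministic lower bounds $\nsc(L_\ell'),\nsc(K_\ell)\ge 2\ell$: one has to rule out any cover of $\bsset{i}$ of cardinality less than two at an intermediate rank. This reduces to a brief disjunction argument, since a one-element cover $\{c\}$ forces every $s\in\bsset{i}$ with a non-empty image under $\rho$ to satisfy $s=c$, whereas $\bsset{i}$ contains two distinct words by construction. The deterministic calculations, by contrast, reduce to standard distinguishability arguments between the two reachable quotients at each intermediate rank.
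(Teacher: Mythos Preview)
Your approach is correct and essentially coincides with the paper's. The paper's proof uses precisely your pair $(K_\ell',a^\ell)=(\Sigma^\ell,a^\ell)$ as the witness for word removal, stating the same complexities $\dsc(\Sigma^\ell)=\ell+2$, $\nsc(\Sigma^\ell)=\ell+1$, $\dsc(\Sigma^\ell\setminus\{a^\ell\})=2\ell+1$, and $\nsc(\Sigma^\ell\setminus\{a^\ell\})=2\ell$, and then dismisses the addition case with ``In the same way it is possible to prove for word addition.'' Your second family $(L_\ell,b^\ell)=(\{a^\ell\},b^\ell)$ makes that handwave explicit, and your observation that swapping roles in each pair yields the opposite bound for the opposite operation is exactly the implicit symmetry the paper relies on. You also give more justification for the nondeterministic lower bounds (via the cover argument) than the paper, which simply asserts the value of~$\nsc$.
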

\begin{proof}
	Let $\Sigma = \{a,b\}$ and $\ell>0$. Consider $L_\ell=\{a,b\}^\ell$, whose bitmap is $\bs(L_\ell) = 1^{2^\ell}$. Let $w=a^\ell$. We have that~$\nsc(L_\ell)=\ell+1$ and $\dsc(L_\ell)=1+\nsc(L)=2+\ell$, while~$\dsc(L_\ell\setminus\{w\})=2\ell+1=1+\nsc(L_\ell \setminus\{w\}) $. In the same way it is possible to prove for word addition.
\end{proof}

The family of languages in the previous proof is also a witness for the upper bound of the operation~$\Sigma^\ell\setminus\{w\}$.

\subsection{Intersection} \label{section:intersection}
Let $L_1, L_2 \subseteq \Sigma^\ell$ be two block languages, for some $\ell>0$ and $|\Sigma| = k$, and their respective bitmaps $\bs(L_1)$,~$\bs(L_2)$. The bitmap of the intersection of $L_1$ and $L_2$  is given by $\bs(L_1)\wedge\bs(L_2)$.

Now, let~$\aut{A}=\langle Q\cup \{\Omega_1\},\Sigma,\delta_1,\state_0,\{q_f\}\rangle$ and~$\aut{B}=\langle P\cup \{\Omega_2\},\Sigma,\delta_2,p_0,\{p_f\}\rangle$ be the minimal \dfas for $L_1$ and  $L_2$, respectively. For obtaining a \dfa for $L_1\cap L_2$, one can use the standard product construction, and obtain a product automaton  $\aut{C}$. As shown in~\cite{han08} (see Table~\ref{tab:cfin}) the size of $\aut{C}$ is at most $mn-3(m+n)+12$, if $|Q|=m-1$ and $|P|=n-1$. This bound is the result of:
\begin{itemize}
	\item There are no transitions to the initial state neither in $\aut{A}$, $\aut{B}$, nor $\aut{C}$ (this saves $m-1+n-1$ states);
	\item In $\aut{C}$, all pairs of states $(q,\Omega_2)$ and $(\Omega_1,p)$, for $q\in Q$ and $p\in P$, can be merged with $(\Omega_1,\Omega_2)$  (this saves $m-2+n-2$ states);
	\item In $\aut{C}$, all pairs of states $(q,p_f)$ or $(q_f,p)$, for $q\in Q$ and $p\in P$, can be merged with $(q_f,p_f)$ or  $(\Omega_1,\Omega_2)$ (if in general $q_f$ and $p_f$ are the pre-dead states,  this saves $m-3+n-3$ states). 
\end{itemize}

However, for block languages ,pre state can be saved since
a state~$(q, p)$ of $\aut{C}$ is both accessible from the initial state and leads to the final state if and only if~$\rank(q) = \rank(p)$, for every $q\in Q$, $p\in P$.  

Let $Q_i$ be the set of states in rank $i$ in $\aut{A}$ and $m_i=\width(i)=|Q_i|$, for $i\in [\ell]$. Let $P_i$ be the set of states in rank $i$ in $\aut{B}$ and $n_i=\width(i)=|P_i|$, for  $i\in [\ell]$. Additionally, $m=1+\sum_{i\in[\ell]}m_i$ and $n=1+\sum_{i\in[\ell]}n_i$  since the dead states $\Omega_j$ do not belong to any rank. We have that:

\begin{lemma} \label{lemma:s-intersection-sufficient}
	Given two \dfas $\aut{A}$ and $\aut{B}$ for block languages $L_1$ and $L_2$, respectively, a \dfa with $\sum_{i=0}^{\ell}m_in_i+1$ states is sufficient to recognize the intersection of $L_1$ and $L_2$, where $m_i$ and $n_i$ are the widths of rank $i$ in $\aut{A}$ and $\aut{B}$, respectively, for $i\in [\ell]$.
\end{lemma}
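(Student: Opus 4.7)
The plan is to apply the standard product construction to $\aut{A}$ and $\aut{B}$ and then exploit a rank-matching property specific to trim DFAs of block languages to collapse all pairs of states with mismatched ranks into a single dead state. The bound then follows by counting the remaining ``diagonal'' pairs rank by rank.

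First, I would recall the structural fact about trim DFAs for block languages: since every accepted word has length exactly $\ell$ and transitions strictly decrease the rank by one, for any non-dead state $q$ of rank $i$, every path from $\state_0$ to $q$ has length exactly $\ell - i$, and every path from $q$ to $q_f$ has length exactly $i$. Consequently, in the product automaton $\aut{C}$ whose state set is $(Q \cup \{\Omega_1\}) \times (P \cup \{\Omega_2\})$, initial state $(\state_0, p_0)$, and unique final state $(q_f, p_f)$, any pair $(q,p)$ that is both reachable from the initial state and co-reachable to $(q_f, p_f)$ must satisfy $\rank(q) = \rank(p)$; in particular, neither component can be a dead state. All pairs violating this condition are either unreachable or can only reach the dead pair $(\Omega_1, \Omega_2)$, so they may safely be merged into a single dead state in $\aut{C}$.

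It remains to check that this merging is consistent with the transition function, which is immediate: from a diagonal pair $(q,p)$ with $\rank(q)=\rank(p)=i$, reading a symbol $\letter$ either yields another diagonal pair in rank $i-1$ (when both $\delta_1(q,\letter)$ and $\delta_2(p,\letter)$ are non-dead), or yields a pair with a dead component, which is routed to the merged dead state. The resulting automaton accepts exactly $L_1 \cap L_2$.

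Counting the surviving states, for each rank $i \in [\ell]$ there are at most $m_i \cdot n_i$ diagonal pairs $(q,p)$ with $q \in Q_i$ and $p \in P_i$, giving $\sum_{i=0}^{\ell} m_i n_i$ non-dead states, plus one dead state, for a total of $\sum_{i=0}^{\ell} m_i n_i + 1$. I do not expect a real obstacle here; the only care needed is to argue clearly that the rank-mismatch pairs need not be distinguished from the dead state, which is exactly what the length-rigidity of block languages grants.
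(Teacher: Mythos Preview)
Your proposal is correct and follows essentially the same approach as the paper: apply the product construction, observe that a pair $(q,p)$ is useful (reachable and co-reachable) only when $\rank(q)=\rank(p)$, and merge all remaining pairs into a single dead state. Your write-up is in fact more explicit than the paper's, which simply invokes the rank-matching observation stated just before the lemma and then counts $\bigcup_{i\in[\ell]} Q_i\times P_i$ plus one dead state.
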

\begin{proof}
	Given the above considerations, the states of the \dfa resulting from trimming $\aut{C}$ are, in the worst-case, $\bigcup_{i\in[\ell]}Q_i\times P_i$ and a single dead state is needed.
\end{proof}

Let us show that this bound is tight for a fixed size of the alphabet, as opposed to the general case of finite languages where a growing alphabet is required~\cite{han08}. Consider the following family of languages, defined over an alphabet $\Sigma$ of size $k$, and let $d>0$ and $x\in\{0,1\}$:
	$$L_{k, d, x} = \cset{a_0\cdots a_{2d-1} \in \Sigma^{2d} \mid \forall i\in[d-1] : i \equiv x \Mod 2 \implies a_i = a_{2d-i}}.$$
Informally, it contains the words that can be split into two halves of size $d$, where, if~$x=0$ ($x=1$, resp.), then the symbols in even (odd, resp.) positions of the first half are equal to their 
	symmetric position
	in the second half.

\begin{lemma} \label{lemma:intersection-dfa-storage}
	Let~$k\geq2$,~$d\geq 0$, and~$x\in\{0,1\}$. Also, let~$\aut{A}$ be the minimal~DFA for~$L_{k, d, x}$ over a~$k$-letter alphabet~$\Sigma$ and let~$m_i $ be the width of~$\aut{A}$, for~$i\in[2d]$. Then, for~$i\in[d,2d]$ we have:
	\begin{align*}
		m_i =
		\begin{cases}
			k^{\lceil \frac{2d-i}{2} \rceil},   & \text{if } x=0; \\
			k^{\lfloor \frac{2d-i}{2} \rfloor}, & \text{if } x=1, \\
		\end{cases}
	\end{align*}
	and for~$i\in[d]$ we have~$m_i=m_{2d-1}$.
\end{lemma}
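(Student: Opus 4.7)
The plan is to analyze the Myhill--Nerode equivalence classes of $L_{k,d,x}$ directly. Reading a word $w=a_0\cdots a_{2d-1}$, a DFA must remember the letters at the \emph{constrained} positions of the first half (those $i$ with $i\equiv x\pmod 2$) in order to later verify the required equalities $a_i=a_{2d-i}$ while scanning the second half; since non-constrained positions are unrestricted, their letters do not influence the state reached. This reduces the width computation to counting constrained positions in the appropriate range.

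For $i\in[d,2d]$, I would set $j=2d-i\le d$, the length of the prefix already read. The number of constrained indices in $\{0,\dots,j-1\}$ is $\lceil j/2\rceil$ when $x=0$ and $\lfloor j/2\rfloor$ when $x=1$, which yields the claimed upper bounds on $m_i$. Reachability of every tuple of values at constrained positions is immediate by choosing the prefix accordingly. For distinguishability, if two prefixes of length $j$ yield different tuples, they differ at some constrained index $i_0$, and a completion of length $i$ can be constructed that places, at the second-half position symmetric to $i_0$, the symbol memorised by one of the two tuples; this forces acceptance for one prefix and rejection for the other, while the remaining second-half letters may be filled arbitrarily, as non-constrained positions impose no restriction on membership in~$L_{k,d,x}$.

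For $i\in[d]$, I would invoke the symmetry of the definition: after reading $2d-i$ letters, the only information still relevant for acceptance is the tuple of memorised-but-unchecked symbols at the constrained positions in $\{0,\dots,i-1\}$. This is exactly the bookkeeping used to compute the width at rank $2d-i$ in the previous step, so a direct bijection of Nerode classes gives $m_i=m_{2d-i}$ (which, substituted into the formulas of the first case, produces the explicit expressions for the ranks of the second half).

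The main obstacle is purely bookkeeping: fixing the two indexing conventions (the parity threshold $x$ and the offset between prefix length and rank) and checking that the symmetric position of any constrained index, used in the distinguishability argument, always lies in the unread portion of the second half, so that the discriminating completion is well defined. Once this is in place, the three pieces (reachability, distinguishability, and the symmetry bijection) combine to establish the stated values of $m_i$ in both ranges.
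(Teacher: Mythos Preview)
Your approach is essentially the paper's: both establish the widths for ranks $i\in[d,2d]$ by a Myhill--Nerode count of the constrained (parity-$x$) positions in the prefix, and then use the language's symmetry to obtain $m_i=m_{2d-i}$ for $i\in[d]$---you do this by directly identifying the tuple of still-unchecked constrained symbols as the relevant state information, while the paper instead observes that $L_{k,d,x}=\R{L_{k,d,x}}$ and appeals to the reversal automaton. One small correction to your distinguishability step: the ``remaining second-half letters'' cannot \emph{all} be filled arbitrarily, since second-half positions symmetric to constrained first-half indices other than $i_0$ must match the first prefix's stored symbols in order to force \emph{acceptance} of that prefix (rejection of the other prefix is already secured at the position symmetric to $i_0$); the paper's explicit completion $\sigma^{2(i-d)}\,\R{w_1}$ accomplishes exactly this.
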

\begin{proof}
	Let us prove for $x=0$.
	\begin{enumerate}
		\item $(\forall i\in[d,2d]): m_i=k^{\lceil \frac{2d-i}{2} \rceil}$: \\
			Let~$w_1,w_2\in\Sigma^{2d-i}$ such that they differ at least in one even position. Now, let~$w_3 = \sigma^{2(i-d)}\R{w_1}$, for some~$\sigma\in\Sigma$. It is easy to see that~$w_1w_3\in L_{k, d, 0}$ but~$w_2w_3\notin L_{k, d, 0}$, so $w_1$ and $w_2$ have different quotients, and so they have to reach different states. Therefore, the number of states on rank $i$ of $\aut{A}$ is given by~$k^{\lceil \frac{2d-i}{2} \rceil}$, where the exponent is the number of odd integers between $i$ and~$2d-1$.
		\item $(\forall i\in[d]): m_i=m_{2d-i}$: \\
			Let us look at~$\R{\aut{A}}$, the~NFA for~$\R{L_{k,d}}$ given by reversing every transition in~$\aut{A}$ and swapping the initial with the final states. In fact, it is easy to see that~$L_{k,d,x}=\R{L_{k,d,x}}$, hence~$\lang(A)=\lang(\R{\aut{A}})$. In~1, we proved that~$m_j=k^{\lceil \frac{2d-j}{2} \rceil}$, for every rank~$j\in[d,2d]$. The~$i$-th rank in~$\aut{A}$ corresponds to the~$(2d-i)$-th rank in~$\R{\aut{A}}$, so that bound must be preserved.
	\end{enumerate}
	For~$x=1$, the number of states is~$k^{\lfloor \frac{2d-i}{2} \rfloor}$, where the exponent is the number of even integers between~$i$ and~$2d-1$, so the proof is similar.
%
%
\end{proof}

Then, we have the following result for the operational state complexity of intersection:
\begin{lemma}\label{lemma:intersection-necessary}
	Let $\aut{A}$ and~$\aut{B}$ be \dfas that accept $L_{k, d, 0}$ and $L_{k, d, 1}$ and $m_i$ and $n_i$ the widths of rank $i$ in $\aut{A}$ and $\aut{B}$, respectively, for $i\in [2d]$ and $d>0$. A \dfa that recognizes the language $L_{k, d, 0}\cap L_{k, d, 1}$ needs~$\sum_{i=0}^{2d}m_in_i+1$ states.
\end{lemma}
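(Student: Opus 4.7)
Since Lemma~\ref{lemma:s-intersection-sufficient} already supplies the matching upper bound $\sum_{i=0}^{2d}m_i n_i+1$, the plan is to establish the lower bound by counting the Myhill--Nerode equivalence classes of $L := L_{k,d,0}\cap L_{k,d,1}$ directly. The key preliminary observation is that combining ``$a_j = a_{2d-1-j}$ for every even $j$'' with the analogous condition for odd $j$ forces the full palindrome condition, so $L$ is precisely the palindrome language on~$\Sigma^{2d}$.

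Next, I would analyse prefixes $w$ of length $2d-i$ (i.e.\ at rank~$i$) in two regimes. For $i \geq d$ the prefix sits inside the first half and is always extendable; any completing suffix $u \in \Sigma^i$ must satisfy $u_{i-1-j} = w_j$ for $j \in [0, 2d-i-1]$, while its middle $2i-2d$ positions remain free, so two prefixes are equivalent iff they coincide, giving $k^{2d-i}$ classes. For $i < d$ the prefix overlaps the second half: it is extendable iff $w_j = w_{2d-1-j}$ for every $j \in [i, d-1]$, and when extendable its unique completion is $w_{i-1}w_{i-2}\cdots w_0$, so two extendable prefixes are equivalent iff their first $i$ symbols coincide, giving $k^i$ classes; all non-extendable prefixes collapse into a single dead class shared across ranks.

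Applying Lemma~\ref{lemma:intersection-dfa-storage} together with the palindromic symmetry $m_i = m_{2d-i}$ (and likewise for $n$), one checks $m_i n_i = k^{2d-i}$ for $i \geq d$ and $m_i n_i = k^i$ for $i < d$, so summing over $i$ and adding $1$ for the dead class reproduces $\sum_{i=0}^{2d} m_i n_i + 1$. The distinguishing suffixes are built into the analysis: for $i \geq d$ with $w_1 \neq w_2$, the reflection of $w_1$ (with any completion of its middle) completes $w_1$ to a palindrome but fails for $w_2$; for $i < d$ with two extendable prefixes whose first $i$ symbols differ, the suffix $w_{1,i-1}\cdots w_{1,0}$ does the same; a non-extendable prefix admits no accepting extension at all and is therefore distinguishable from every extendable one.

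The main obstacle will be the $i < d$ regime: I must verify that each of the $k^i$ first-$i$-symbol patterns is realised by some extendable prefix (pick $w_i, \ldots, w_{d-1}$ freely and force $w_d, \ldots, w_{2d-i-1}$ by reflection) and that the dead class is genuinely inhabited at these ranks, so that the $+1$ is truly required. Once these points are confirmed, the induced bijection between accessible product states at rank~$i$ and equivalence classes at that rank yields the claimed lower bound.
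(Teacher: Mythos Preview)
Your proposal is correct and follows essentially the same route as the paper: both identify $L_{k,d,0}\cap L_{k,d,1}$ as the set of even-length palindromes $\{w\,\R{w}\mid w\in\Sigma^d\}$, count the distinct left quotients rank by rank (obtaining $k^{2d-i}$ for $i\ge d$ and $k^i$ for $i<d$, plus the dead state), and verify that this matches $\sum_i m_i n_i+1$ via Lemma~\ref{lemma:intersection-dfa-storage}. Your write-up is more explicit about the Myhill--Nerode distinguishing suffixes and about realisability of each class, but the underlying argument is the same.
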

\begin{proof}

	As stated in~\cref{lemma:intersection-dfa-storage}, we have~$m_i=m_{2d-i}=k^{\lceil \frac{2d-i}{2} \rceil}$ and~$n_i=n_{2d-i}=k^{\lfloor \frac{2d-i}{2} \rfloor}$, for~$i\in[d, 2d]$. Moreover, it is easy to see that 
	$$L_{k,d,0}\cap L_{k,d,1} = \{ w\R{w} \mid w\in\Sigma^d\}\text,$$ 
	that is, the set of palindromes of even length. A minimal \dfa~$\aut{C}$ for this language with set of states $S = S_0\cup\ldots\cup S_{\ell}$ must first be able to remember the entire first half of the word, therefore, $|S_i|=k\cdot|S_{i+1}|=k^{2d-i}$ for $i\in[d, 2d-1]$. For the second half, it must check for the repetition of the first, then, $|S_i|=|S_{2d-i}|$, for $i\in[d]$. In fact,
	$$ |S_i| = m_in_i = k^{\lceil \frac{2d-i}{2} \rceil}k^{\lfloor \frac{2d-i}{2} \rfloor}=k^{2d-i}\text,$$	
	as desired.
\end{proof}

From~\cref{lemma:s-intersection-sufficient,lemma:intersection-necessary} we have:
\begin{theorem}\label{theorem:sc-intersection}
	Given two \dfas $\aut{A},\aut{B}$ for block languages $L_1,L_2\subseteq \Sigma^\ell$, for $\ell>0$, $\sum_{i=0}^{\ell}m_in_i +1$  states are necessary and sufficient in the worst-case for a \dfa that accepts the intersection of $L_1$ and $L_2$, where $m_i$ and $n_i$ are the widths of rank $i$ in $\aut{A}$ and $\aut{B}$, respectively, for $i\in [\ell]$.
	\end{theorem}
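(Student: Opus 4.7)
The plan is to observe that this theorem is an immediate combination of the two preceding lemmas. \cref{lemma:s-intersection-sufficient} already establishes sufficiency of $\sum_{i=0}^{\ell}m_i n_i + 1$ states, so no new argument is needed for the upper bound. The key idea there is the product construction applied to \dfas for block languages: a pair $(q,p)$ is simultaneously reachable from $(q_0,p_0)$ and co-reachable to $(q_f,p_f)$ only when $\rank(q)=\rank(p)$, because every transition in either automaton decreases the rank by exactly one. Hence after trimming, the surviving non-dead states are contained in $\bigcup_{i\in[\ell]} Q_i\times P_i$, giving at most $\sum_{i=0}^{\ell}m_i n_i$ states plus a single dead state.

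For the matching lower bound, I would invoke \cref{lemma:intersection-necessary} with the witness family $L_{k,d,0}$ and $L_{k,d,1}$ from the preceding construction, which yields $\ell = 2d$. Their intersection is the palindrome language $\{w\R{w}\mid w\in\Sigma^d\}$, whose minimal \dfa must have exactly $k^{2d-i}$ reachable distinguishable states at each rank $i$ (it must store the full length-$d$ prefix and then verify the mirrored suffix). By \cref{lemma:intersection-dfa-storage} the widths of the minimal \dfas for the two operands satisfy $m_i n_i = k^{\lceil(2d-i)/2\rceil}\cdot k^{\lfloor(2d-i)/2\rfloor}=k^{2d-i}$, so the product bound is attained rank by rank, and the additional dead state accounts for the ``$+1$''. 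Combining both directions gives the claimed tight bound.

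The only mild subtlety, which is not really an obstacle but worth mentioning, is that the witness construction naturally produces even-length languages ($\ell=2d$); should tightness be desired for odd $\ell$ as well, a one-letter padding of the witness family suffices and does not affect the rank-by-rank width product. Thus the proof reduces to citing \cref{lemma:s-intersection-sufficient,lemma:intersection-necessary} and noting that the constant-size alphabet requirement from \cite{han08} is dropped here precisely because the rank structure of block languages already enforces the matching of ranks in the product automaton.
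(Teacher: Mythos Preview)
Your proposal is correct and matches the paper's approach exactly: the paper's proof is literally the one-line remark ``From \cref{lemma:s-intersection-sufficient,lemma:intersection-necessary} we have:'', and you do precisely this, simply unpacking the content of those two lemmas. Your additional comments on odd~$\ell$ and the fixed alphabet are accurate side observations that the paper does not spell out.
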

	
For the nondeterministic state complexity, the bounds are the same except that the dead state is not considered. In fact, the family witness languages  for the tightness of deterministic state complexity is also a witness for the nondeterministic one.
\begin{theorem}\label{theorem:nsc-intersection}
	Let~$\aut{A}=\langle Q,\Sigma,\delta_1,\state_0,\{q_f\}\rangle$ and~$\aut{B}=\langle P,\Sigma,\delta_2,p_0,\{p_f\}\rangle$ be  minimal \nfas for two block languages $L_1, L_2\subseteq\Sigma^\ell$, respectively, for some $\ell>0$, and such that $|Q| = m$ and $|P| = n$. Let $Q_i$ be the set of states in rank $i$ in $\aut{A}$ and $m_i=\width(i)=|Q_i|$, for $i\in [\ell]$. Let $P_i$ be the set of states in rank $i$ in $\aut{B}$ and $n_i=\width(i)=|P_i|$, for  $i\in [\ell]$. Additionally, $m=\sum_{i\in[\ell]}m_i$ and 
	$n=\sum_{i\in[\ell]}n_i$. 
	
Then, an \nfa with $\sum_{i=0}^{\ell}m_in_i$ states is sufficient to recognize the intersection of both languages and the bound is tight for $k>1$.
\end{theorem}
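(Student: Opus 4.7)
The plan for the upper bound is to apply the standard NFA product construction to~$\aut{A}$ and~$\aut{B}$, obtaining an NFA~$\aut{C}$ with state set~$Q\times P$, initial state~$(\state_0,p_0)$, final state~$(q_f,p_f)$, and transition function~$\delta((q,p),\sigma)=\delta_1(q,\sigma)\times\delta_2(p,\sigma)$. Because both~$\aut{A}$ and~$\aut{B}$ are ranked with unique initial and final states, after reading~$j$ input symbols every reachable state of~$\aut{C}$ lies in~$Q_{\ell-j}\times P_{\ell-j}$; symmetrically, every co-reachable state~$(q,p)$ must satisfy~$\rank(q)=\rank(p)$. Removing the states with mismatched ranks does not alter the accepted language and leaves at most~$\sum_{i=0}^{\ell}m_in_i$ states; in contrast with the deterministic case of~\cref{theorem:sc-intersection}, no dead state is required.

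For the matching lower bound, the plan is to reuse the witness family~$L_{k,d,0},L_{k,d,1}\subseteq\Sigma^{2d}$ from~\cref{lemma:intersection-necessary}, defined over an alphabet of arbitrary size~$k\geq2$. I first need to confirm that the widths of the minimal NFAs for these witnesses coincide with those of the minimal DFAs given in~\cref{lemma:intersection-dfa-storage}. The bitmap factors appearing at each rank of~$L_{k,d,x}$ correspond to quotients with pairwise disjoint supports, each one fixing a prescribed value on the relevant even or odd positions of the first half, so no combination via a strictly smaller cover is possible and the minimal cover at each rank equals the set itself. Hence the NFA widths coincide with the values~$m_i$ and~$n_i$ already used in~\cref{theorem:sc-intersection}.

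The intersection~$L_{k,d,0}\cap L_{k,d,1}$ equals~$\{w\R{w}\mid w\in\Sigma^{d}\}$, the set of even-length palindromes of length~$2d$. The next step is to construct, at every rank~$i\in[2d]$, a fooling set~$F_i$ of size~$m_in_i$ consisting of pairs~$(u,v)$ with~$|u|=2d-i$ and~$uv$ a palindrome. For~$i\in[d,2d]$ I fix an arbitrary palindrome~$p\in\Sigma^{2i-2d}$ and take~$F_i=\{\,(u,\,p\R{u})\mid u\in\Sigma^{2d-i}\,\}$: for~$u\neq u'$ the cross-concatenation~$u(p\R{u'})$ fails the palindrome test at the earliest disagreement between~$u$ and~$u'$, so~$|F_i|=k^{2d-i}=m_in_i$. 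For~$i\in[0,d]$ I use the symmetric family~$(w_0 c^{2d-2i},\R{w_0})$ with~$w_0\in\Sigma^{i}$ and a fixed~$c\in\Sigma$, which analogously gives~$|F_i|=k^{i}=m_in_i$. Pairs coming from different ranks~$i\neq i'$ are automatically fooling-compatible because their cross-concatenations have length~$2d-i+i'\neq2d$ and therefore cannot be palindromes of length~$2d$. Consequently~$F=\bigcup_{i}F_i$ is a fooling set of size~$\sum_{i=0}^{2d}m_in_i$, and the classical fooling-set lower bound forces any NFA for the intersection to have at least this many states.

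The principal obstacle is the preliminary step of certifying that the witnesses~$L_{k,d,x}$ admit no nondeterministic compression at any rank; the disjoint-support observation sketched above should make this clean, after which the rank-wise palindrome fooling sets are a routine, if slightly tedious, piece of bookkeeping.
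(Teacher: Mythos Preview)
Your proof is correct and uses the same witness family~$L_{k,d,x}$ and the same rank-restricted product construction for the upper bound as the paper. The difference is in the lower-bound argument: the paper simply asserts that~$\nsc(L_{k,d,x})=\dsc(L_{k,d,x})-1$ because ``the \nfa\ for~$L_{k,d,x}$ must also be able to remember the same information as the \dfa,'' and then leans on the deterministic count from~\cref{lemma:intersection-necessary} for the intersection. You instead supply an explicit fooling set for the palindrome language~$\{w\R{w}\mid w\in\Sigma^{d}\}$, layered rank by rank, which makes the \nfa\ lower bound self-contained. Your disjoint-supports observation for the operands is also correct---at every rank the distinct nonempty quotients of~$L_{k,d,x}$ prescribe different values at the same symmetric positions and are therefore pairwise disjoint as languages, so the minimal cover of~$\bsset{i}$ is~$\bsset{i}$ itself. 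In short, both arguments reach the same conclusion via the same witnesses; yours is the more detailed and rigorous route.
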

\begin{proof} 
	The fact that~$\sum_{i=0}^{\ell}m_in_i$ states are sufficient follows from the previous discussions.
	Moreover,
	this number of states is necessary,
	as can be noticed by considering the languages $L_{k, d, x}$ given above.
	Recall the language $L_{k, d, x}$, for some $k>1$, $d>0$ and $x\in\{0,1\}$.
	In fact,
	it is easy to see that $\dsc(L_{k, d, x})-1 =\nsc(L_{k, d, x})$,
	since the \nfa for $L_{k, d, x}$ must also be able to remember the same information as the \dfa.
	Then,
	if $\aut{A}$ (resp. $\aut{B}$) is a minimal \nfa that recognizes the language~$L_{k, d, 0}$
	(resp. $L_{k, d, 1}$),
	an \nfa that recognizes the intersection of both needs exactly~$\sum_{i=0}^{\ell}n_im_i$ states.
\end{proof}

\subsection{Union}
Let $L_1, L_2 \subseteq \Sigma^\ell$ be two block languages, for some $\ell>0$ and $|\Sigma| = k$, and their respective bitmaps $\bs(L_1), \bs(L_2)$. The bitmap of the union of~$L_1$ and~$L_2$ is~$\bs(L_1)\vee\bs(L_2)$.

Let~$\aut{A}=\langle Q\cup \{\Omega_1\},\Sigma,\delta_1,\state_0,\{q_f\}\rangle$ and~$\aut{B}=\langle P\cup \{\Omega_2\},\Sigma,\delta_2,p_0,\{p_f\}\rangle$ be the minimal \dfas for $L_1$ and $L_2$, respectively, with $|Q| = m$ and $|P| = n$. Again, let~$\aut{C}$ be the product \dfa of $\aut{A}$ and $\aut{B}$. Because $L_1,L_2$ are finite we know that $m+n$ states can be saved: $m+n-2$ because the initial states are non returning and $2$ more because the final states $(q_f, \Omega_2)$, $(\Omega_1, p_f)$, and $(q_f, p_f)$ can be merged into a single final state. However, again, one only needs to consider pairs of states $(q,p)$ such that $\rank(q) = \rank(p)$, for $q\in Q, p\in P$. Let  $Q_i$ be the set of states in rank $i$ in $\aut{A}$ and $m_i=\width(i)=|Q_i|$, for  $i\in [\ell]$. Let  $P_i$ be the set of states in rank $i$ in $\aut{B}$ and $n_i=\width(i)=|P_i|$, for  $i\in [\ell]$. Additionally, $m=1+\sum_{i\in[\ell]}m_i$ and $n=1+\sum_{i\in[\ell]}n_i$ since the dead states $\Omega_j$ do not belong to any rank. We have that

\begin{lemma}	\label{lemma:sc-union-sufficient}
Given two \dfas $A$ and $B$ for block languages $L_1$ and $L_2$, respectively, a \dfa with 
	$$\sum_{i=1}^{\ell-1}(m_in_i+m_i+n_i)+3$$ 
	states is sufficient to recognize the union of $L_1$ and $L_2$, where $m_i$ and $n_i$ are the widths of rank $i$ in $\aut{A}$ and $\aut{B}$, respectively, for $i\in [\ell]$.
\end{lemma}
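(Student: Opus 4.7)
The plan is to apply the standard Cartesian product construction to $\aut{A}$ and $\aut{B}$ to obtain a \dfa $\aut{C}$ for $L_1\cup L_2$, and then exploit the rank structure of block-language \dfas to discard unreachable states and merge indistinguishable ones.

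First I would set $\aut{C}=\langle(Q\cup\{\Omega_1\})\times(P\cup\{\Omega_2\}),\Sigma,\delta,(\state_0,p_0),F'\rangle$, where $\delta$ is defined componentwise and $F'$ consists of all pairs containing $q_f$ or $p_f$. Because both $\aut{A}$ and $\aut{B}$ accept languages of length exactly $\ell$, every transition in each of them strictly decreases the rank (or sends to the dead state). Consequently, starting from $(\state_0,p_0)$ at rank $\ell$, after reading $j$ symbols the only reachable pairs $(\state,p)$ are those where either $\state\in Q_{\ell-j}$ and $p\in P_{\ell-j}$, or $\state=\Omega_1$ and $p\in P_{\ell-j}$, or $\state\in Q_{\ell-j}$ and $p=\Omega_2$, or $\state=\Omega_2$ and $p=\Omega_2$. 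All pairs of mismatched rank are therefore unreachable and can be removed.

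Next I would identify which of the remaining states may be merged. At rank $\ell$ only $(\state_0,p_0)$ is reachable, giving one state. The pair $(\Omega_1,\Omega_2)$ is the unique dead state of $\aut{C}$ and contributes one state. At rank $0$ the pairs $(\state_f,p_f)$, $(\state_f,\Omega_2)$, and $(\Omega_1,p_f)$ are all accepting and have empty right language, so they are pairwise equivalent and collapse to a single accepting state. At each intermediate rank $i\in[1,\ell-1]$, the set of potentially reachable and useful states is exactly $(Q_i\times P_i)\cup(Q_i\times\{\Omega_2\})\cup(\{\Omega_1\}\times P_i)$, contributing at most $m_in_i+m_i+n_i$ states.

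Summing these contributions yields the bound
\begin{equation*}
1\;+\;\sum_{i=1}^{\ell-1}(m_in_i+m_i+n_i)\;+\;1\;+\;1\;=\;\sum_{i=1}^{\ell-1}(m_in_i+m_i+n_i)+3\text,
\end{equation*}
as required. The main obstacle is purely bookkeeping: one has to justify carefully that the three "extra" states (initial, final, dead) are genuinely distinct and not already counted among the intermediate ranks, and that no further merging across ranks is forced (which is automatic here since two states of different rank have right languages of different maximal word length and hence are inequivalent).
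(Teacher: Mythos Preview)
Your proposal is correct and follows essentially the same approach as the paper: build the product automaton, observe that only pairs of equal rank (together with pairs involving a dead state) are reachable, merge the three rank-$0$ final states into one, and count $(Q_i\times P_i)\cup(Q_i\times\{\Omega_2\})\cup(\{\Omega_1\}\times P_i)$ at each intermediate rank. The only blemish is a typo in your reachability case analysis, where the last case should read $\state=\Omega_1$ and $p=\Omega_2$ rather than $\state=\Omega_2$.
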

\begin{proof}
	Let~$\aut{C}$ be the product automaton from $\aut{A}$ and~$\aut{B}$. As mentioned above, the final states $(q_f, \Omega_2)$, $(\Omega_1, p_f)$ can be merged with $(q_f, p_f)$, and a state $(p,q)$ is only accessible from the initial state if $\rank(p)=\rank(q)$. Therefore, the \dfa resulting from trimming $\aut{C}$ has a single initial state,  a final state and a dead state, and also the states $(Q_i\times P_i) \cup (Q_i\times\{\Omega_2\}) \cup (\{\Omega_1\}\times P_i)$, at each rank $i\in[1, \ell-1]$. Thus, the sufficient  number of states follows.
\end{proof}

In fact, the bound is tight for an alphabet with size at least $3$.

\begin{lemma}\label{lemma:sc-union-necessary}
Given two \dfas $A$ and $B$ for block languages $L_1$ and $L_2$ over $\Sigma^\ell$, respectively, a \dfa with $\sum_{i=1}^{\ell-1}(m_in_i+m_i+n_i)+3$ states is necessary to recognize the union of $L_1$ and $L_2$, where $m_i$ and $n_i$ are the widths of rank $i$ in $\aut{A}$ and $\aut{B}$, respectively, for $i\in [\ell]$ and $|\Sigma| > 2$.
\end{lemma}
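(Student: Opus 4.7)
The plan is to exhibit, for each admissible sequence of widths $(m_i),(n_i)$, witness block languages $L_1,L_2\subseteq\Sigma^\ell$ over $\Sigma=\{a,b,c\}$ whose minimal \dfas achieve those widths and whose union forces exactly the claimed number of states in the trimmed product automaton described in \cref{lemma:sc-union-sufficient}. My choice is to take $L_1\subseteq\{a,b\}^\ell$ with minimal \dfa $\aut{A}$ whose rank widths equal $m_i$ and with every $c$-transition diverted to $\Omega_1$; dually, I take $L_2\subseteq\{a,c\}^\ell$ with minimal \dfa $\aut{B}$ whose rank widths equal $n_i$ and with every $b$-transition diverted to $\Omega_2$. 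Concrete choices for $L_1$ and $L_2$ can be obtained by picking appropriate bitmaps in the spirit of \cref{lemma:bitmap-of-max}. The letters $b$ and $c$ thus play asymmetric roles: $b$ is ``live'' only in $\aut{A}$, and $c$ only in $\aut{B}$.

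The bulk of the proof then amounts to showing that every state retained after trimming the product is essential. First I would verify that the $m_in_i$ paired states $(q,p)\in Q_i\times P_i$ at each rank $i\in[1,\ell-1]$ are pairwise non-equivalent: two such pairs differ in at least one coordinate, and the minimality of $\aut{A}$ (resp.\ $\aut{B}$) supplies a distinguishing word in $\{a,b\}^i$ (resp.\ $\{a,c\}^i$), which the other language rejects because it lacks the ``foreign'' letter. Next I would check that the $m_i$ states $(q,\Omega_2)$ and the $n_i$ states $(\Omega_1,p)$ at each rank are mutually non-equivalent by the minimality of the respective component.

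The crucial step, and the real heart of the argument, is to separate a mixed state from a paired one: I need to show $(q,\Omega_2)\not\equiv(q',p')$ for any $p'\in P_i$. Here the asymmetric role of $c$ pays off: any word distinguishing $p'$ from the dead state of $\aut{B}$ can be taken in $\{a,c\}^i$ and must contain the letter $c$ (otherwise $p'$ would also be dead in the restricted $\aut{B}$); such a word sends $q$ to $\Omega_1$ in $\aut{A}$, so $(q,\Omega_2)$ rejects it while $(q',p')$ accepts it via its $\aut{B}$-component. A symmetric argument, interchanging $b$ and $c$, handles the states $(\Omega_1,p)$. The $+3$ comes from the initial state at rank $\ell$, the merged final state at rank $0$, and the unique dead state, which are obviously inequivalent to each other and to all intermediate-rank states. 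Summing over $i\in[1,\ell-1]$ gives the total $\sum_{i=1}^{\ell-1}(m_in_i+m_i+n_i)+3$.

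The hard part will be the simultaneous realization of the prescribed widths and the separation discipline: I must ensure that the chosen $L_1$ really has minimal \dfa widths $m_i$ over the full alphabet $\Sigma$ after the extra $c$-transitions to $\Omega_1$ are appended (and dually for $L_2$ with $b$). This is also where the hypothesis $|\Sigma|>2$ is essential, since with only two symbols there is no spare letter to be reserved as a private distinguisher for each language without collapsing some ranks; with three letters, the construction gains exactly the flexibility needed to make all $m_in_i+m_i+n_i$ states at each intermediate rank inequivalent.
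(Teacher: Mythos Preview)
Your plan is considerably more ambitious than the paper's: you aim to realize \emph{every} admissible width profile $(m_i),(n_i)$ by a witness pair, whereas the paper is content with the single family $L_{1,\ell}=\{a,c\}^\ell$, $L_{2,\ell}=\{b,c\}^\ell$ over $\Sigma=\{a,b,c\}$, for which $m_i=n_i=1$ at every rank and the minimal \dfa for the union visibly needs three states per intermediate rank (``some $a$ already read'', ``some $b$ already read'', ``only $c$'s so far''), giving $3(\ell-1)+3=3\ell$ states in total and matching the formula.

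Your general argument, however, has a genuine gap at the crucial separation step. You assert that a word accepted from a non-dead state $p'\in P_i$ of $\aut{B}$ ``must contain the letter $c$ (otherwise $p'$ would also be dead)'', but nothing in your construction forces this: if $\mathcal{L}_{p'}(\aut{B})=\{a^i\}$ for some $p'$, then $a^i$ distinguishes $p'$ from $\Omega_2$ yet contains no $c$, and it may well also lie in $\mathcal{L}_q(\aut{A})$. In that situation $(q,\Omega_2)$ and $(q',p')$ can be equivalent whenever $\mathcal{L}_q(\aut{A})=\mathcal{L}_{q'}(\aut{A})\cup\{a^i\}$. The same problem arises when separating $(q,\Omega_2)$ from $(\Omega_1,p)$ across the two mixed families---a case you dispatch only ``by the minimality of the respective component'', which says nothing about this cross-comparison; if both right languages happen to equal $\{a^i\}$ they coincide. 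To rescue the plan you would have to impose an extra structural condition (e.g.\ from every live state of $\aut{B}$ some accepted word contains $c$, and dually for $\aut{A}$ with $b$) and then argue that languages with the prescribed widths \emph{and} this property actually exist. None of this is needed for the paper's witness, where every live right language in $\aut{A}$ is $\{a,c\}^i$ and in $\aut{B}$ is $\{b,c\}^i$, so the separations are immediate.
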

\begin{proof}
	Since $\aut{A}$ and $\aut{B}$ are deterministic, $n_{\ell-1}$ and $m_{\ell-1}$, the number of states at rank $\ell-1$ of $\aut{A}$ and $\aut{B}$, respectively, are bounded by $k$ and not equal to $0$. Analogously, the width of the rank $\ell-1$ of the \dfa for the union of $\lang(\aut{A})$ and $\lang(\aut{B})$ is also at most $k$. When $k=2$, it is easy to see that the inequality $0< n_{\ell-1}m_{\ell-1} + n_{\ell-1} +m_{\ell-1} \leq k$ has no solutions.
	
	Now, consider the languages $L_{1,\ell} = \{a,c\}^\ell$ and $L_{2,\ell} = \{b,c\}^\ell$, and let $\aut{A}$ and $\aut{B}$ be the \dfas that recognize them, respectively, for some $\ell$ and $\Sigma=\{a,b,c\}$. We have that $\dsc(L_{1,\ell}) = \dsc(L_{2,\ell}) = \ell+2$ and $n_i = m_i = 1$, for every $i\in[\ell]$. The minimal \dfa that recognizes the language $L_{1,\ell}\cup L_{2,\ell}$ requires~$3$ states at each rank $i\in[1,\ell-1]$: one state  when some $a$ has already been read, so the word is in $L_{1,\ell}$; one state when some $b$ has already been read, so the word is in $L_{2,\ell}$; and one state for when only $c$'s have been read, so the \dfa still does not know to what particular language it belongs. Then, $\dsc(L_{1,\ell}\cup L_{2,\ell}) = \sum_{i=1}^{\ell-1}(n_im_i+n_i+m_i)+3 = 3\ell$.
\end{proof}

From \cref{lemma:sc-union-sufficient,lemma:sc-union-necessary} we have:

\begin{theorem}\label{theorem:sc-union}
		Given two \dfas $\aut{A},\aut{B}$ for block languages $L_1,L_2\subseteq \Sigma^\ell$, for $\ell >0$, $\sum_{i=1}^{\ell-1}(m_in_i+m_i+n_i)+3$  states are sufficient and necessary, if $\Sigma> 2$,   in the worst-case for a \dfa that accepts  the union  of $L_1$  and~$L_2$, where~$m_i$ and $n_i$ are the widths of rank $i$ in $\aut{A}$ and $\aut{B}$, respectively, for $i\in [\ell]$.
\end{theorem}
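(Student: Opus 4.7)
The plan is to combine \cref{lemma:sc-union-sufficient} and \cref{lemma:sc-union-necessary} directly. The first lemma already pins down the upper bound $\sum_{i=1}^{\ell-1}(m_in_i+m_i+n_i)+3$ by analysing the trimmed product construction: once unreachable and non-co-accessible pairs are discarded, the surviving states at each intermediate rank $i\in[1,\ell-1]$ are exactly $(Q_i\times P_i)\cup(Q_i\times\{\Omega_2\})\cup(\{\Omega_1\}\times P_i)$, to which one adds a unique initial state at rank $\ell$, a unique accepting state at rank $0$ (obtained by merging $(q_f,p_f)$, $(q_f,\Omega_2)$, and $(\Omega_1,p_f)$), and a single sink. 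The second lemma delivers a matching witness, $L_{1,\ell}=\{a,c\}^\ell$ and $L_{2,\ell}=\{b,c\}^\ell$ over a ternary alphabet, in which every rank of each operand has unit width and each of the three surviving equivalence classes at every intermediate rank of the minimal union-\dfa{} is genuinely distinct; the count then collapses to $3\ell$, matching the formula.

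The only subtlety worth flagging is the side condition $|\Sigma|>2$. In a trim \dfa{} for a block language every state of rank $\ell-1$ must send at least one symbol to the (unique) rank-$0$ accepting state, so it can send at most $|\Sigma|-1$ symbols to the sink; consequently $n_{\ell-1}m_{\ell-1}+n_{\ell-1}+m_{\ell-1}$ cannot exceed $k$ when $n_{\ell-1},m_{\ell-1}\ge 1$, and for $k=2$ this inequality has no positive solution. I would point this out as the reason for the quantification and then invoke the two lemmas in sequence to conclude; there is no further technical obstacle, since all the work has already been done in \cref{lemma:sc-union-sufficient,lemma:sc-union-necessary}.
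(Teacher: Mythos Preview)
Your proposal is correct and follows exactly the paper's approach: the theorem is obtained by simply combining \cref{lemma:sc-union-sufficient} (upper bound via the trimmed product) and \cref{lemma:sc-union-necessary} (tightness via $L_{1,\ell}=\{a,c\}^\ell$, $L_{2,\ell}=\{b,c\}^\ell$).

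One small slip in your side remark about the $|\Sigma|>2$ condition: states of rank~$\ell-1$ do \emph{not} transition to rank~$0$ (they transition to rank~$\ell-2$). The actual reason the width at rank~$\ell-1$ is bounded by~$k$, both in the operands and in the union \dfa, is that the unique initial state at rank~$\ell$ has at most~$k$ outgoing transitions; hence $m_{\ell-1},n_{\ell-1}\le k$ and the width of rank~$\ell-1$ in the product is also at most~$k$, so for $k=2$ the inequality $m_{\ell-1}n_{\ell-1}+m_{\ell-1}+n_{\ell-1}\le k$ with $m_{\ell-1},n_{\ell-1}\ge 1$ has no solution. This is precisely the argument in the proof of \cref{lemma:sc-union-necessary}, so you need not reproduce it in the theorem's proof anyway.
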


For the nondeterministic state complexity, the upper bound is the same as for finite languages over the same alphabet size.
\begin{theorem}	
	Let $L_1, L_2 \subseteq \Sigma^\ell$ with~$\ell>0$ and~$|\Sigma|=k$, such that $\nsc(L_1) = n$ and $\nsc(L_2) = m$. Then, $\nsc(L_1\cup L_2) \leq n+m-2$, and this bound is reached.
\end{theorem}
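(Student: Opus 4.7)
The plan is to establish the upper bound via a direct construction and to exhibit a two-letter witness attaining it.

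For the upper bound, I would take minimal \nfas $\aut{A}_i=\langle Q_i,\Sigma,\delta_i,q_0^i,\{q_f^i\}\rangle$ for $L_i$, $i\in\{1,2\}$, which by the conventions recalled in~\cref{sec:block} can be assumed to have a single initial and a single final state. The construction forms the disjoint union of $\aut{A}_1$ and $\aut{A}_2$ and then identifies $q_0^1$ with $q_0^2$ into one initial state and $q_f^1$ with $q_f^2$ into one final state. Since the underlying state sets are otherwise disjoint and every accepting computation of each $\aut{A}_i$ has length exactly $\ell$, every accepting run of the merged automaton follows the transitions of precisely one of $\aut{A}_1$ or $\aut{A}_2$ from the merged source to the merged sink; hence its language is $L_1\cup L_2$. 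The two identifications save two states, giving the bound $n+m-2$.

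For the tightness, I would use $L_1=\{a^\ell\}$ and $L_2=\{b^\ell\}$ over $\Sigma=\{a,b\}$, whose minimal \nfas are chains of $\ell+1$ states, so $n=m=\ell+1$. The upper bound yields $\nsc(L_1\cup L_2)\leq 2\ell$, and a matching lower bound follows by exhibiting a fooling set of cardinality $2\ell$, namely
\[
S = \{\,(a^i,a^{\ell-i})\mid 0\leq i\leq \ell-1\,\}\cup\{\,(b^j,b^{\ell-j})\mid 1\leq j\leq \ell\,\}.
\]
Every concatenation of a pair in $S$ equals $a^\ell$ or $b^\ell$ and hence lies in $L_1\cup L_2$. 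For two distinct $a$-pairs, the crossed product $a^i a^{\ell-i'}=a^{\ell+i-i'}$ has length different from $\ell$; the same holds within the $b$-pairs; and any crossed product between an $a$-pair and a $b$-pair either mixes both letters or has length strictly less than $\ell$, so it lies outside $L_1\cup L_2$. Thus $S$ is a fooling set, yielding $\nsc(L_1\cup L_2)\geq 2\ell=n+m-2$.

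The only delicate point is verifying the boundary cases of the fooling set, in particular checking that when the $a$-pair $(\varepsilon,a^\ell)$ (with $i=0$) or the $b$-pair $(b^\ell,\varepsilon)$ (with $j=\ell$) participates in a crossed product, the resulting word still falls outside $L_1\cup L_2$. Beyond that, both directions are routine adaptations of the standard finite-language arguments, exploiting the fact that block languages admit \nfas with a single initial and a single final state.
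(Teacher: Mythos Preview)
Your proof is correct and follows essentially the same approach as the paper: the same upper-bound construction (merging the unique initial states and the unique final states of the two block \nfas) and the same witness family $L_1=\{a^\ell\}$, $L_2=\{b^\ell\}$ over a two-letter alphabet. The paper simply asserts $\nsc(\{a^\ell\}\cup\{b^\ell\})=2\ell$ without argument, whereas you supply an explicit fooling set of size $2\ell$; this is a welcome addition but not a different route.
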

\begin{proof}
	Let $L_{1,\ell} = \{a^\ell\}$ and $L_{2,\ell} = \{b^\ell\}$, for some $\ell>0$ and $\Sigma=\{a,b\}$. We have that $\nsc(L_{1,\ell})=\nsc(L_{2,\ell})=\ell+1$ and~$\nsc(L_{1,\ell}\cup L_{2,\ell}) = 2\ell$.
\end{proof}

\subsection{Concatenation}
Consider two languages $L_1 \subseteq \Sigma^{\ell_1}$ and $L_2 \subseteq \Sigma^{\ell_2}$, for some $\ell_1, \ell_2 > 0$ and $|\Sigma|=k$, with bitmaps $\bs(L_1)$ and $\bs(L_2)$, respectively. The bitmap for the language $L_1L_2$ is given by replacing each $1$ in $\bs(L_1)$ by $\bs(L_2)$ and each~$0$ by $0^{k^{\ell_2}}$. This ensures that each word of $L_1L_2$ is obtained by concatenating a word of $L_1$ with a word of $L_2$ and for each word obtained in such way the correspondent bit in $\bs(L_1L_2)$ is set to $1$.

The deterministic state complexity of the concatenation for block languages coincides with the one for the finite languages when the first operand, $L_1$, has a single final state in its minimal \dfa. Therefore, we have the following exact upper bound:

\begin{theorem}
Let $L_1\subseteq \Sigma^{\ell_1}$ and $L_2\subseteq \Sigma^{\ell_2}$, for some $\ell_1, \ell_2 > 0$, be two block languages over a $k$-letter alphabet, where $\dsc(L_1)=m$ and $\dsc(L_2)=n$. Then, $\dsc(L_1L_2) = m + n - 2$.
\end{theorem}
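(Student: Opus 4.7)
The plan is to prove the two directions separately: a direct construction yields the upper bound $\dsc(L_1L_2)\le m+n-2$, and a simple unary witness confirms tightness.

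For the upper bound, let $\aut{A}_1=\langle Q_1\cup\{\Omega_1\},\Sigma,\delta_1,q_0^{(1)},\{q_f^{(1)}\}\rangle$ and $\aut{A}_2=\langle Q_2\cup\{\Omega_2\},\Sigma,\delta_2,q_0^{(2)},\{q_f^{(2)}\}\rangle$ be the minimal \dfas for $L_1$ and $L_2$. The key feature of the block setting, and the reason one beats the general finite-language bound of $(m-n+3)2^{n-2}-1$, is that each $\aut{A}_i$ has exactly one accepting state: every word of $L_i$ has the same length, so all accepting runs end in the unique state of rank~$0$. I would build a \dfa $\aut{A}$ for $L_1L_2$ by removing $q_f^{(1)}$, redirecting every transition of $\aut{A}_1$ that used to land on $q_f^{(1)}$ to $q_0^{(2)}$, and merging the two dead states into a single $\Omega$. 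Since $\aut{A}_1$ is deterministic and every length-$\ell_1$ run reaches exactly one of $q_f^{(1)}$ or $\Omega_1$, this operation preserves determinism and makes a word of length $\ell_1+\ell_2$ accepted iff its length-$\ell_1$ prefix is in $L_1$ and its length-$\ell_2$ suffix is in $L_2$. The state count is $|Q_1\setminus\{q_f^{(1)}\}|+|Q_2|+1=(m-2)+(n-1)+1=m+n-2$.

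For tightness, the natural candidates are the unary block languages $L_1=\{a^{\ell_1}\}$ and $L_2=\{a^{\ell_2}\}$ over $\Sigma=\{a,b\}$, which satisfy $\dsc(L_i)=\ell_i+2$. Their concatenation is $\{a^{\ell_1+\ell_2}\}$, whose minimal \dfa has $\ell_1+\ell_2+2$ states, exactly matching $m+n-2$.

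No genuine obstacle is anticipated. The one item that deserves a sentence of care is that rerouting to $q_0^{(2)}$ preserves correctness: since $q_f^{(1)}$ is reachable only by words of length exactly $\ell_1$, every accepting run passes through $q_0^{(2)}$ at step $\ell_1$ and spends its remaining $\ell_2$ steps inside $\aut{A}_2$, so there is no ambiguity between the ``$\aut{A}_1$ phase'' and the ``$\aut{A}_2$ phase'' of the composed automaton.
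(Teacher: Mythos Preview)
Your proof is correct, but it takes a somewhat different route from the paper's. The paper argues via bitmaps: since $\bs(L_1L_2)$ is obtained by substituting $\bs(L_2)$ for each $1$ in $\bs(L_1)$ and $0^{k^{\ell_2}}$ for each $0$, one reads off directly that the set of nonzero factors of length $k^i$ in $\bs(L_1L_2)$ coincides with $\bsr(L_2)_i$ for $i\le\ell_2$ and with $\bsr(L_1)_{i-\ell_2}$ for $i>\ell_2$. This pins down the width of every rank of the \emph{minimal} \dfa for $L_1L_2$, so the paper actually establishes the exact equality $\dsc(L_1L_2)=m+n-2$ for \emph{every} pair of (nonempty) block languages, not merely in the worst case; a separate witness is then unnecessary (though one is recorded in the paper's example). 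Your approach is the classical operational-complexity one: an explicit $(m+n-2)$-state \dfa gives the upper bound, and the singleton languages $\{a^{\ell_i}\}$ witness tightness. What you gain is that your argument is entirely self-contained and does not invoke the bitmap machinery; what the paper's argument buys is the stronger pointwise statement and a cleaner explanation of \emph{why} the concatenated automaton is already minimal (the ranks of the two operands are preserved verbatim).
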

\begin{proof}
	Let $\aut{A}$ and $\aut{B}$ be the minimal \dfas for $L_1$ and $L_2$, respectively. Also, let $\aut{C}$ be the minimal \dfa for~$L_1L_2$. Considering the bitmaps for these languages, the width of the rank $i$ of $\aut{C}$ is $|\bsr(L_2)|_i$, if $i\in[\ell_2]$, or is $|\bsr(L_1)|_{i-\ell_2}$, if $i\in[\ell_2+1,\ell_1+\ell_2]$. Then, $\aut{C}$ saves $2$ states by reusing the final state of $\aut{A}$ for the initial state of $\aut{B}$ (alternatively, reusing the initial state of $\aut{B}$ for the final state of $\aut{A}$) and also by eliminating one of the dead states.
\end{proof}

For the nondeterministic state complexity, the same result is expected, coinciding with the state complexity for the finite languages.
\begin{theorem}
Let $L_1\subseteq \Sigma^{\ell_1}$ and $L_2\subseteq \Sigma^{\ell_2}$, for some $\ell_1, \ell_2 > 0$, be two block languages over a $k$-letter alphabet, where $\nsc(L_1)=m$ and $\nsc(L_2)=n$. Then, $\nsc(L_1L_2) = m + n - 1$.	
\end{theorem}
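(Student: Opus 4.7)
The plan is to establish the upper bound by a direct merging construction, exploiting the structural property of block NFAs noted in \cref{sec:block}, and then exhibit a simple unary-style witness family for the lower bound.

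For the upper bound, let $\aut{A}_1=\langle Q_1,\Sigma,\delta_1,\state_0^1,\{\state_f^1\}\rangle$ and $\aut{A}_2=\langle Q_2,\Sigma,\delta_2,\state_0^2,\{\state_f^2\}\rangle$ be minimal \nfas for $L_1$ and $L_2$, respectively; as recalled in \cref{sec:block}, each has a single initial state and a single final state. I would then build an \nfa for $L_1 L_2$ by taking the disjoint union of $\aut{A}_1$ and $\aut{A}_2$ and identifying $\state_f^1$ with $\state_0^2$ into a single state $\state_\star$, whose outgoing transitions are $\delta_2(\state_0^2,\cdot)$. The resulting \nfa has $m+n-1$ states, its unique initial state is $\state_0^1$, and its unique final state is $\state_f^2$. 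Correctness is immediate from the rank structure of block-language \nfas: every accepting run must first reach rank $0$ of $\aut{A}_1$ (after reading exactly $\ell_1$ symbols, because all runs in $\aut{A}_1$ are acyclic and length-preserving), pass through the merged state $\state_\star$, and then run through $\aut{A}_2$ for $\ell_2$ more symbols. Hence $\nsc(L_1L_2)\leq m+n-1$.

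For the lower bound, I would use the unary witness family $L_{1,\ell_1}=\{a^{\ell_1}\}$ and $L_{2,\ell_2}=\{a^{\ell_2}\}$ over $\Sigma=\{a\}$ (or any larger alphabet). These are block languages with $\nsc(L_{1,\ell_1})=\ell_1+1=m$ and $\nsc(L_{2,\ell_2})=\ell_2+1=n$, because their minimal \nfas are simple chains of $\ell_i+1$ states. Their concatenation is $L_{1,\ell_1}L_{2,\ell_2}=\{a^{\ell_1+\ell_2}\}$, whose minimal \nfa is again a chain and so has $\nsc=\ell_1+\ell_2+1=(m-1)+(n-1)+1=m+n-1$. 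Combined with the upper bound, this yields $\nsc(L_1L_2)=m+n-1$.

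The only delicate point in the argument is the state-identification step in the upper bound, where one must ensure that merging $\state_f^1$ with $\state_0^2$ neither creates spurious accepting runs (it does not, since $\state_f^1$ has rank $0$ in $\aut{A}_1$ and therefore no outgoing transitions originally, while $\state_0^2$ has rank $\ell_2$ and therefore no incoming transitions originally) nor breaks the length-uniformity of accepted words. Both facts are immediate from the ranked/acyclic structure of block-language \nfas recalled in \cref{sec:preliminaries,sec:block}, so no further machinery is needed.
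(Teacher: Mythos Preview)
Your proposal is correct and follows essentially the same approach as the paper: the upper bound via merging the unique final state of the first \nfa with the unique initial state of the second is the intended construction (the paper simply invokes the finite-language bound from \cite{holzer03}), and your unary witness family $\{a^{\ell_1}\},\{a^{\ell_2}\}$ is exactly the one the paper gives in \cref{example:concatenation}. The only additional remark in the paper that you do not reproduce is the stronger observation that \emph{every} pair of block languages attains the bound, since concatenation preserves the rank structure of the operands; but this is not needed for the theorem as stated.
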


In fact, any two languages $L_1 \subseteq \Sigma^{\ell_1}$ and $L_2 \subseteq \Sigma^{\ell_2}$ result in a family of witness languages. That is due to the fact that this operation preserves the ranks of the \dfas of the operands. 
\begin{example}\label{example:concatenation}
	Let $L_{1,\ell_1} = \{a^{\ell_1}\}$ and $L_{2,\ell_2} = \{a^{\ell_2}\}$, for $\ell_1, \ell_2>0$ and $\Sigma=\{a\}$. We have that $\dsc(L_{1,\ell_1})=\ell_1+2$, $\dsc(L_2)=\ell_2+2$, and $\dsc(L_{1,\ell_1}L_{2,\ell_2})=\ell_1+\ell_2+2$. We also have  $\nsc(L_{1,\ell_1})=\ell_1+1$, $\nsc(L_{2,\ell_2})=\ell_2+1$, and $\nsc(L_{1,\ell_1}L_{2,\ell_2})=\ell_1+\ell_2+1$.
\end{example}

\subsection{Block Complement}\label{section:block complement}
Consider a language $L \subseteq \Sigma^\ell$, for some $\ell>0$ and alphabet of size $k>0$, and let~$\bs$ be its bitmap. 
Now,
given a block language~$\Sigma^{\ell}$,
we consider block complement language,
namely $\Sigma^{\ell} \setminus L$, also 
denoted by $\overline{L}^\ell$.

Then, the bitmap of the language $\Sigma^\ell \setminus L$, namely~$\overline{\bs}$, is given by flipping every bit of~$\bs$.

\begin{theorem}\label{theorem:sc-blockcomplement}
	Let $L\subseteq \Sigma^\ell$, with $\ell>0$, be a block language with $|\Sigma|=k$, such that $\dsc(L) = m$. Then, $m - (\ell-1) \leq \dsc(\Sigma^\ell \setminus L) \leq m + (\ell-1)$.
\end{theorem}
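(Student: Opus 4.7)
The plan is to exploit the bitmap-to-minimal-DFA construction recalled in \cref{sec:block}, which gives the precise identity
\[
\dsc(L) \;=\; 1 + \sum_{i=0}^{\ell} |\bsset{i}|,
\]
where $\bsset{i}$ is the set of distinct non-zero length-$k^i$ factors of $\bs(L)$ and the $+1$ accounts for the dead state. Since the bitmap of $\Sigma^\ell \setminus L$ is exactly the bitwise complement $\overline{\bs}$, I will compare $|\bsset{i}|$ with its analogue $\bsset{i}^{\overline{L}}$ rank by rank.

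The key observation is that the map $s \mapsto \overline{s}$ is a bijection between the set $A_i$ of all distinct length-$k^i$ factors of $\bs$ and the corresponding set of factors of $\overline{\bs}$. Passing from ``all factors'' to ``non-zero factors'' removes the factor $0^{k^i}$ on the $L$ side and removes $0^{k^i}$ (equivalently, $\overline{1^{k^i}}$) on the $\overline{L}^\ell$ side. Hence
\[
|\bsset{i}^{\overline{L}}| - |\bsset{i}| \;=\; [\,0^{k^i}\in A_i\,] \;-\; [\,1^{k^i}\in A_i\,] \;\in\; \{-1,0,1\}.
\]

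I would then dispose of the boundary ranks, where the difference vanishes for nontrivial $L$. At rank $\ell$ we have $A_\ell=\{\bs\}$ with $\bs \notin \{0^{k^\ell},1^{k^\ell}\}$, so both indicators are $0$. At rank $0$ we have $A_0 \subseteq \{0,1\}$ and both $0$ and $1$ occur (the former because $L\neq\Sigma^\ell$, the latter because $L\neq\emptyset$), so both indicators are $1$ and they cancel. Summing over the remaining $\ell-1$ ranks yields $|\dsc(\overline{L}^\ell)-\dsc(L)| \le \ell-1$, giving both inequalities at once.

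The main delicate point is the bookkeeping at the boundary ranks: one has to be explicit that $0^{k^i}$ is deliberately excluded from $\bsset{i}$ by construction while $0^{k^i}$-factors of $\bs$ correspond to $1^{k^i}$-factors of $\overline{\bs}$ (and conversely), so the asymmetry in $[\,0^{k^i}\in A_i\,]-[\,1^{k^i}\in A_i\,]$ is actually the only source of discrepancy. Once this is pinned down cleanly, the rank-by-rank sum and the telescoping bound are immediate; showing tightness, if desired, could be done separately by exhibiting a family whose bitmap contains $0^{k^i}$ as a factor at every intermediate rank but never $1^{k^i}$ (or vice versa).
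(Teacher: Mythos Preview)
Your proposal is correct and follows essentially the same approach as the paper: both compare, rank by rank, the number of non-zero bitmap factors of $\bs$ with those of $\overline{\bs}$, using that bitwise complementation is a bijection on the multiset of length-$k^i$ factors and that the only possible discrepancy at each rank comes from the presence or absence of the all-zero and all-one factors. Your treatment is in fact more explicit than the paper's, which states $\bigl||\bsr_i|-|\overline{\bsr}_i|\bigr|\le 1$ without separately disposing of ranks $0$ and $\ell$; your handling of those boundary ranks (under the natural nontriviality assumptions $L\neq\emptyset$ and $L\neq\Sigma^\ell$) is precisely what pins the bound down to $\ell-1$ rather than $\ell+1$.
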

\begin{proof}
	The number of states on a rank $i\in[\ell]$ of the minimal \dfa for $\Sigma^\ell \setminus L$ is given by the cardinality of~$\overline{\bsr}_i$, the set of the non-null factors of length $k^i$ on the bitmap~$\overline{\bs}$. If, for some $j\in[k^{\ell-i}-1]$, we have that $s^i_j = 0\cdots0$, which by definition implies that $s^i_j\notin\bsr_i$, then $\overline{s^i_j}=1\cdots1$ and so~$\overline{s^i_j}\in\overline{\bsr}_i$. Moreover, the complement may also occur. Therefore, $\big| |\bsr_i|-|\overline{\bsr}_i| \big| \leq 1$.
	
	Let $L_\ell=\{a^\ell\}$, for $\ell>0$. As we previously saw on \cref{theorem:witness-word-op}, $\dsc(L)=\ell+1$, and~$\dsc(\Sigma^\ell \setminus L) = 2\ell$.
\end{proof}
	
For the nondeterministic state complexity of the block complement operation, we have that the bound meets the one of the complement from the gmeneral case for finite languages considering the determinization cost of block languages. Also, this bound is asymptotically tight for alphabets of size at least $2$.

\begin{lemma}\label{lemma:nsc-blockcomplement-sufficient} 
	Let $L\subseteq \Sigma^\ell$ be a block language with $|\Sigma|=k$, such that $L$ is accepted by an $m$-state \nfa. Then, $2^{O(\sqrt{m})}$ states are sufficient for an \nfa for $\Sigma^\ell \setminus L$.
\end{lemma}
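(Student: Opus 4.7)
The plan is to combine the determinization bound for NFAs of block languages with the deterministic block-complement bound already established in \cref{theorem:sc-blockcomplement}.

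First, I would apply the subset construction to the given $m$-state \nfa for $L$. By the result of Karhum\"aki and Okhotin recalled in the introduction, and reused in \cref{cor:upperscrev}, determinizing an $m$-state \nfa for a block language produces a \dfa with $2^{O(\sqrt{m})}$ states. Call this \dfa~$\aut{A}$, and let~$m'$ denote its number of states, so that $m' \in 2^{O(\sqrt m)}$.

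Next, I would apply \cref{theorem:sc-blockcomplement} to~$\aut{A}$: there is a \dfa for $\Sigma^{\ell} \setminus L = \Sigma^{\ell} \setminus \lang(\aut{A})$ of size at most $m' + (\ell - 1)$. Every \dfa is, in particular, an \nfa, so $\Sigma^\ell\setminus L$ is accepted by an \nfa with at most $m' + \ell - 1$ states.

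It remains to control~$\ell$ in terms of~$m$. Since the original \nfa is trim and its states are partitioned into the $\ell+1$ ranks $Q_0,\ldots,Q_\ell$, each non-empty, we have $\ell + 1 \le m$, i.e. $\ell - 1 \in O(m)$. Thus $m' + (\ell - 1) \le 2^{O(\sqrt{m})} + O(m) \subseteq 2^{O(\sqrt{m})}$, which yields the claimed bound.

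The argument is essentially a composition of two results already proved in the paper, so the only care needed is the observation that the linear overhead in $\ell$ coming from \cref{theorem:sc-blockcomplement} is absorbed into the dominant $2^{O(\sqrt{m})}$ term; I do not expect any genuine obstacle here.
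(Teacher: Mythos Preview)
Your proposal is correct and follows essentially the same route as the paper: determinize using the $2^{O(\sqrt m)}$ bound of Karhum\"aki and Okhotin, apply \cref{theorem:sc-blockcomplement} to get at most $\ell-1$ additional states, and note that a \dfa is an \nfa. You are in fact slightly more careful than the paper, since you explicitly justify that the additive $\ell-1$ term is absorbed into $2^{O(\sqrt m)}$ via $\ell+1\le m$, whereas the paper leaves this implicit.
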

\begin{proof}
	Let $\aut{A}$ be a \nfa for $L$ with $m$ states. The minimal \dfa $\aut{B}$ for $L$ will have at most $2^{O(\sqrt{m})}$ states~\cite{KarhOkho:2O14}. Furthermore, the minimal \dfa $\aut{C}$ for $\Sigma^\ell \setminus L$ will have at most $\ell+1$ more states then $\aut{B}$, as shown in~\cref{theorem:sc-blockcomplement}. The nondeterministic state complexity is trivially bounded by the deterministic state complexity, so the sufficient number of states follows.
\end{proof}

Consider the following family presented by Karhumäki and Okhotin~\cite{KarhOkho:2O14}:
	$$ L_{k, d} = \cset{w_0\cdots w_{2d-1} \mid \exists i\in[d-1] : w_i = w_{i+d} \in \Sigma \setminus \{\sigma_{k-1}\} }$$ 
	defined over a $k$-ary alphabet $\Sigma = \{\letter_0, \ldots, \letter_{k-1} \}$. Informally, this language contains words that can be split into two halves of size $d$, such that there is at least one position in the first half that matches its counterpart in the second one, and it is different than the ``\emph{prohibited symbol}'' $\sigma_{k-1}$.

\begin{proposition}[\cite{KarhOkho:2O14}] \label{proposition:nsc-complement}
	For each $k\geq2$ and $d\geq2$, the language $L_{k,d}$ is recognized by an \nfa with $(k-1)d^2+2d$ states.
\end{proposition}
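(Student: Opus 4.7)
The plan is to exhibit an \nfa{} with exactly $(k-1)d^2 + 2d$ states that accepts $L_{k,d}$, and then check that it is sound and complete. Membership of a word $w = w_0 \cdots w_{2d-1}$ in $L_{k,d}$ is witnessed by a pair $(i,\sigma) \in [d-1] \times (\Sigma \setminus \{\sigma_{k-1}\})$ with $w_i = w_{i+d} = \sigma$, so the automaton should nondeterministically guess such a pair and then verify it.

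I would arrange the states in three blocks. First, an \emph{initial chain} $p_0, \ldots, p_{d-1}$ of $d$ states that silently skims the first half, with $p_0$ initial and transitions from $p_j$ on every letter to $p_{j+1}$ for $j < d-1$. Second, for each pair $(i,\sigma)$ with $i \in [d-1]$ and $\sigma \in \Sigma \setminus \{\sigma_{k-1}\}$, a \emph{verification chain} $v_{i,\sigma}^0, v_{i,\sigma}^1, \ldots, v_{i,\sigma}^{d-1}$ of $d$ states, wired so that $p_i$ on $\sigma$ goes to $v_{i,\sigma}^0$ (commitment), $v_{i,\sigma}^\ell$ on every letter goes to $v_{i,\sigma}^{\ell+1}$ for $\ell < d-1$ (traversing the $d-1$ intermediate positions), and $v_{i,\sigma}^{d-1}$ only on $\sigma$ goes to $t_{d-1-i}$ (verification of $w_{i+d}=\sigma$). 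Third, a shared \emph{tail chain} $t_{d-1}, \ldots, t_1, t_0$ of $d$ states that consumes whatever remains, with $t_0$ the unique final state and transitions from $t_\ell$ on every letter to $t_{\ell-1}$. The state count is $d + (k-1) \cdot d \cdot d + d = (k-1)d^2 + 2d$, as required.

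For correctness, completeness is immediate: given $w \in L_{k,d}$ with witness $(i,\sigma)$, the run $p_0, p_1, \ldots, p_i, v_{i,\sigma}^0, \ldots, v_{i,\sigma}^{d-1}, t_{d-1-i}, \ldots, t_0$ reads $w$ letter by letter and ends at the unique final state. Soundness follows because any accepting run must use exactly one commitment edge $p_i \to v_{i,\sigma}^0$ on $\sigma = w_i$ and exactly one verification edge $v_{i,\sigma}^{d-1} \to t_{d-1-i}$ on $\sigma = w_{i+d}$, forcing $w_i = w_{i+d} = \sigma \neq \sigma_{k-1}$.

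The main subtlety I expect is ruling out accepting runs on words of length different from $2d$, which could in principle arise from the shared tail chain. Because every transition consumes exactly one symbol and the layered structure is acyclic, each state sits at a fixed distance from $p_0$, and one checks that $t_0$ is reachable only after exactly $2d$ symbols have been consumed. This length-rigidity is also the reason one cannot collapse the $d$ verification chains attached to a fixed $\sigma$ into a single chain indexed only by $\sigma$ to save $(k-1)d$ states: doing so would detach the commitment position $i$ from the correct tail depth $d-1-i$ and force nondeterministic transitions from a single $v_\sigma^{d-1}$ into several tail states, admitting spurious accepting runs of the wrong length.
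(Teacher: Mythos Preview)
The paper does not supply its own proof of this proposition; it is quoted verbatim from Karhum\"aki and Okhotin~\cite{KarhOkho:2O14}. Your construction is correct and is essentially the standard one: the three blocks (initial chain, the $(k-1)d$ verification chains of length $d$ indexed by $(i,\sigma)$, and the shared tail) give exactly $d+(k-1)d\cdot d+d=(k-1)d^2+2d$ states, the acyclic layered structure forces every accepting run to consume precisely $2d$ symbols (each state sits at a unique distance from $p_0$, and $t_0$ is at distance $2d$), and any accepting run must traverse one commitment edge $p_i\to v_{i,\sigma}^0$ on $w_i=\sigma$ and one verification edge $v_{i,\sigma}^{d-1}\to t_{d-1-i}$ on $w_{i+d}=\sigma$, so $w_i=w_{i+d}=\sigma\neq\sigma_{k-1}$ as required.

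Your final paragraph is extraneous to the task: the proposition only asks for \emph{some} \nfa{} with $(k-1)d^2+2d$ states, not a minimal one, so no argument that the chains cannot be collapsed is needed. (Your intuition there is nevertheless right: merging the $d$ chains for a fixed $\sigma$ into a single chain and letting $v_\sigma^{d-1}$ branch nondeterministically into several tail states destroys the layering and admits accepting runs on words of length different from $2d$.)
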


\begin{lemma} \label{lemma:nsc-complement}
	For each $k\geq2$ and $d\geq2$, the language~$\overline{L}_{k,d}^{2d}$ defined over a $k$-letter alphabet~$\Sigma$ requires at least $k^d$ states on the $d$-th rank.
\end{lemma}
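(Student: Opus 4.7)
The plan is a standard distinguishability (Myhill--Nerode style) argument at rank $d$: exhibit $k^d$ pairwise distinguishable quotients of the form $u^{-1}\overline{L}_{k,d}^{2d}$, one for each prefix $u\in\Sigma^d$. Since each such $u$ is a prefix of length exactly $d$, it is read from the initial state (rank $2d$) to a state of rank $d$, so these quotients correspond to $k^d$ distinct, non-dead states at rank $d$.

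First I would verify that every $u\in\Sigma^d$ reaches a non-dead state at rank $d$, i.e., that $u^{-1}\overline{L}_{k,d}^{2d}\neq\emptyset$. For this, take $v\in\Sigma^d$ with $v_i\neq u_i$ for every $i\in[d-1]$; this is possible because $k\geq 2$. Then no position $i$ satisfies $u_i=v_i$, so $uv\notin L_{k,d}$, hence $uv\in\overline{L}_{k,d}^{2d}$. Thus each $u$ indexes a state of rank $d$.

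Next I would show that distinct $u\neq u'$ in $\Sigma^d$ give distinct quotients. Fix $i\in[d-1]$ with $u_i\neq u'_i$. Since they differ, at least one of them is not $\sigma_{k-1}$; relabel so that $u_i\neq\sigma_{k-1}$. Define the distinguishing suffix $v\in\Sigma^d$ by $v_i=u_i$ and $v_j=\sigma_{k-1}$ for $j\neq i$. Then at position $i$ we have $(uv)_i=u_i=v_i\in\Sigma\setminus\{\sigma_{k-1}\}$, so $uv\in L_{k,d}$, i.e.\ $uv\notin\overline{L}_{k,d}^{2d}$. On the other hand, for $u'v$: at position $i$, $u'_i\neq u_i=v_i$, so position $i$ causes no membership in $L_{k,d}$; and for every $j\neq i$, $v_j=\sigma_{k-1}$, so either $u'_j\neq\sigma_{k-1}=v_j$ (mismatch) or $u'_j=v_j=\sigma_{k-1}$ (matched symbol is prohibited). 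In neither case does position $j$ witness membership in $L_{k,d}$, so $u'v\in\overline{L}_{k,d}^{2d}$. Hence $v$ separates the two quotients.

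Combining the two steps yields $k^d$ distinct, non-dead states at rank $d$ of any DFA for $\overline{L}_{k,d}^{2d}$, which is the desired lower bound. I expect no real obstacle here; the only subtle point is the relabeling $u\leftrightarrow u'$ to ensure the distinguishing index carries a non-prohibited symbol on the side we want placed into $L_{k,d}$, which is forced by $u_i\neq u'_i$ and $|\{\sigma_{k-1}\}|=1$.
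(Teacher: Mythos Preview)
Your proposal is correct and follows essentially the same approach as the paper: a Myhill--Nerode distinguishability argument showing that all $k^d$ prefixes in $\Sigma^d$ yield pairwise distinct quotients, using a suffix that is $\sigma_{k-1}$ everywhere except at one coordinate. Your presentation is in fact a bit cleaner than the paper's: you add the explicit non-emptiness check (every $u\in\Sigma^d$ reaches a live rank-$d$ state), and your single relabeling ``$u_i\neq\sigma_{k-1}$'' handles all cases uniformly, whereas the paper splits on which of the two differing symbols equals $\sigma_{k-1}$ and does not explicitly spell out the case where neither does (though that case is covered by the same construction).
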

\begin{proof}
	First, notice that~$\overline{L}_{k,d}^{2d}$ is the block complement of the language  $\overline{L}_{k,d}$ defined above, formally
	$$ \overline{L}_{k,d}^{2d} = \cset{ w_0\cdots w_{2d-1} \mid \forall i\in[d-1]: w_i \neq w_{i+d} \text{ or }  w_i = \sigma_{k-1} }\text.$$
	Let $w_1$ and $w_2$ be two words in $\Sigma^d$ such that~$a$ and~$b$ are the~$i$-th symbols of~$w_1$ and~$w_2$, respectively, with~$a\neq b$ and~$i\in[d]$. If~$a=\sigma_{k-1}$ then, with~$w_3=\sigma_{k-1}^{i-1}\,b\,\sigma_{k-1}^{d-i}$, we have~$w_1w_3\in\overline{L}_{k,d}^{2d}$ but~$w_2w_3\notin\overline{L}_{k,d}^{2d}$. If~$b=\sigma_{k-1}$ then, with~$w_3=\sigma_{k-1}^{i-1}\,a\,\sigma_{k-1}^{d-i}$, we have~$w_1w_3\notin\overline{L}_{k,d}^{2d}$ but~$w_2w_3\in\overline{L}_{k,d}^{2d}$. As a consequence,~$w_1^{-1}\overline{L}_{k,d}^{2d} \neq w_2^{-1}\overline{L}_{k,d}^{2d}$. Therefore, one state in rank~$d$ is needed for each word in~$\Sigma^d$.
	
	
\end{proof}

With these results, it is possible to determine that the nondeterministic state complexity for the complement operation given in \cref{lemma:nsc-blockcomplement-sufficient} is tight.

\begin{theorem}\label{theorem:nsc-complement}
	Let $m\geq2$ and $\Sigma$ an alphabet of size $k\geq2$. Then, there exists a language $L\subseteq\Sigma^\ell$, for some~$\ell>0$, such that $\nsc(L)=m$ and $\nsc(\Sigma^{\ell} \setminus L)=2^{\Omega\left(\sqrt{m}\right)}$. 
\end{theorem}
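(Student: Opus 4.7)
The plan is to take the witness family~$L_{k,d}$ introduced in the preceding proposition as the language~$L$, and combine the NFA upper bound from that proposition with the rank-$d$ lower bound on its block complement provided by the preceding lemma. Concretely, $L_{k,d}$ is accepted by an $\nfa$ with $(k-1)d^2+2d$ states, so $\nsc(L_{k,d})\in O(d^2)$; and by the lemma, any automaton for $\Sigma^{2d}\setminus L_{k,d}$ must have at least $k^d$ states at rank~$d$, which in particular forces $\nsc(\Sigma^{2d}\setminus L_{k,d})\geq k^d$.

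Choosing $d=\Theta(\sqrt m)$ so that $\nsc(L_{k,d})=\Theta(m)$ then yields $\nsc(\Sigma^{2d}\setminus L_{k,d}) \geq k^d = k^{\Theta(\sqrt m)} = 2^{\Omega(\sqrt m)}$, which matches the upper bound of the preceding sufficiency lemma up to the constant in the exponent. Since the family is defined over any alphabet of size $k\geq 2$, the result covers all admissible alphabet sizes uniformly.

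The main technicality I expect to address is achieving $\nsc(L)=m$ exactly rather than $\Theta(m)$. This can be handled either by restricting the claim to the infinite subsequence $m=(k-1)d^2+2d$, as is standard in state complexity whenever the witnesses form a discrete family, or by padding $L_{k,d}$ with a short prefix of the form $a^t$ in order to tune $\nsc(L)$ to the desired value of $m$; such padding changes the block complement's NFA size only additively by $O(t+d)$ states, which is absorbed by the $2^{\Omega(\sqrt m)}$ lower bound. Since the two non-trivial ingredients, namely the quadratic NFA upper bound on $L_{k,d}$ and the exponential rank-$d$ lower bound on its block complement, are already established in the previous proposition and lemma, what remains for the proof is essentially a direct asymptotic calculation together with the choice of~$d$ in terms of~$m$.
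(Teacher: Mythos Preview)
Your proposal is correct and follows essentially the same approach as the paper: both use the Karhum\"aki--Okhotin family~$L_{k,d}$, invoke the quadratic NFA upper bound from the proposition and the $k^d$ rank-$d$ lower bound from the lemma, and then choose~$d$ of order~$\sqrt{m}$ to conclude. The paper makes the choice concrete by taking~$d$ to be the largest integer with $(k-1)d^2+2d\le m$ and then computes $k^d\ge k^{\sqrt{m/(k-1)}-2}=2^{\Omega(\sqrt m)}$; your discussion of the exact-equality issue for $\nsc(L)=m$ (subsequence or padding) is actually more careful than the paper, which simply asserts that $L_{k,d}$ is recognized by an $m$-state \nfa without further comment.
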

\begin{proof}
	Consider $d$ as the largest integer for which $(k-1)d^2+2d\leq m$. Following the work in~\cite{KarhOkho:2O14}, we have that
	$$d =  \left\lfloor \sqrt{\frac{m}{k-1} + \frac{1}{(k-1)^2}} - \frac{1}{k-1} \right\rfloor\text.$$
	Then, $L_{k,d}$ is a language recognized by an $m$-state \nfa, while every \nfa for $\overline{L}_{k,d}^\ell$ requires, by Lemma~\ref{lemma:nsc-complement}, at least
	$$k^d = k^{\left\lfloor \sqrt{\frac{m}{k-1} + \frac{1}{(k-1)^2}} - \frac{1}{k-1} \right\rfloor} \geq k^{\sqrt{\frac{m}{k-1}}-2} =2^{\Omega(\sqrt{m})}$$
	states, as required.
\end{proof}

\subsection{Kleene Star and Plus}\label{section:starandplus}
Let $L\subseteq\Sigma^\ell$, for some $\ell>0$ and $\bs$ its bitmap. From $\bs$ one can obtain the minimal \dfa for $L$, namely $\aut{A}=\langle Q,\Sigma,\delta_0,\state_0,\{\state_f\}\rangle$.

 A \dfa $\aut{B}=\langle Q\setminus \{ q_f\},\Sigma,\delta_1,\state_0,\{\state_0\}\rangle$ recognizes the language $L^\star$ if $\delta_1(\state,\letter)=q_0$, for all $\state\in Q$ such that $\rank(\state)=1$ and $\letter\in\Sigma$, and $\delta_1(\state,\letter)=\delta(\state,\letter)$, for the remaining pairs $(\state, \letter)\in Q\times\Sigma$. That is, the \dfa for $L^\star$ is given by substituting  all the transitions with final state as the target state to transitions to the initial state. The same applies for the \nfa for $L^\star$, as the following theorem states.

\begin{theorem}
	Let $L\subseteq \Sigma^\ell$, with $\ell>0$, be a block language with $\dsc(L) = n$ and $\nsc(L) = m$. Then, $\dsc(L^\star) = n-1$ and $\nsc(L^\star) = m-1$.
\end{theorem}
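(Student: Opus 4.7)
I would start from a minimal $\aut{A}=\langle Q,\Sigma,\delta_0,q_0,\{q_f\}\rangle$ (DFA or NFA) for $L$; by the block-language normalization $q_0$ is the unique rank-$\ell$ initial state and $q_f$ the unique rank-$0$ final state. The construction described just before the theorem removes $q_f$, redirects every transition that used to enter $q_f$ to enter $q_0$, and declares $q_0$ both initial and final. An accepting run of the resulting automaton $\aut{B}$ decomposes into a (possibly empty) sequence of $q_0\to q_0$ loops, each labeled by a word of $L$, so $\lang(\aut{B})=L^\star$. Counting states gives $|\aut{B}|=|\aut{A}|-1$, yielding the upper bounds $\dsc(L^\star)\le n-1$ and $\nsc(L^\star)\le m-1$.

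For the deterministic lower bound I would verify that $\aut{B}$ is already minimal. Every state of $Q\setminus\{q_f\}$ remains reachable from $q_0$ in $\aut{B}$ because the path reaching it in $\aut{A}$ has strictly decreasing ranks and therefore never passes through the unique rank-$0$ state $q_f$, so the same path survives the construction. For distinguishability, from a rank-$i$ state $p$ the right language of $\aut{B}$ is $\lang_p(\aut{A})\cdot L^\star$ with $\lang_p(\aut{A})\subseteq\Sigma^i$: two same-rank states remain separated by the length-$i$ word that distinguishes them in $\aut{A}$ (its extension by $L^\star$ does not change what happens at length~$i$), while two states at distinct ranks $i\neq j$ in $\{1,\ldots,\ell\}$ have non-empty right languages supported on the disjoint length classes $\{i+k\ell:k\ge 0\}$ and $\{j+k\ell:k\ge 0\}$ (disjoint because $|i-j|<\ell$). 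The dead state is trivially distinguishable. Hence $\dsc(L^\star)\ge n-1$.

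For the nondeterministic lower bound I would apply the same merging to a minimal NFA for $L$, obtaining an NFA $\aut{B}'$ for $L^\star$ with $m-1$ states, and then argue that no smaller NFA exists. Given any NFA $\aut{N}$ for $L^\star$ with $k$ states, $\epsilon\in L^\star$ forces some $q^\ast\in I\cap F$; I would split $q^\ast$ into an initial-only copy $q^\ast_{\mathrm{init}}$ and a final-only copy $q^\ast_{\mathrm{fin}}$, redirecting incoming transitions of $q^\ast$ to $q^\ast_{\mathrm{fin}}$ and issuing outgoing transitions from $q^\ast_{\mathrm{init}}$. This produces an NFA on $k+1$ states accepting exactly the words that admit a path in $\aut{N}$ from $I$ to $F$ avoiding $q^\ast$ at interior positions. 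The length argument used in the DFA case shows that every length-$\ell$ accepting path already avoids $F$ (and hence $q^\ast$) internally, so every $w\in L$ retains an accepting run. The main obstacle, and the place where the block-language hypothesis is essential, is the converse inclusion: that no word of $L^j$ with $j\ge 2$ is spuriously accepted. I would handle this by invoking the cover/rank structure of NFAs for block languages from Section~\ref{sec:block} to normalize $\aut{N}$ so that $q^\ast$ lies on every iteration boundary of every accepting path, forcing the split to cut exactly at those boundaries. The resulting NFA then accepts $L$, giving $m\le k+1$, i.e.\ $\nsc(L^\star)\ge m-1$.
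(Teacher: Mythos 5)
Your construction and the two upper bounds coincide with what the paper does --- indeed the paper offers only that construction and asserts the theorem with no written proof, so both of your lower-bound arguments go beyond it. The deterministic one is essentially sound for the trim states (the mod-$\ell$ length-class separation and the length-$i$ distinguishing words both work), but the claim ``every state of $Q\setminus\{q_f\}$ remains reachable'' fails for the dead state: the transitions of $\aut{A}$ into $\Omega$ that leave $q_f$ are deleted together with $q_f$, so if no \emph{other} trim state has a transition to $\Omega$ --- which happens exactly when $L=\Sigma^\ell$ --- then $\Omega$ is unreachable in $\aut{B}$ and $\dsc(L^\star)$ drops to $n-2$. Concretely, $\dsc(\Sigma^\ell)=\ell+2$ while $\dsc((\Sigma^\ell)^\star)=\ell$, since $(\Sigma^\ell)^\star$ is recognized by an $\ell$-state cycle; so the stated equality itself needs $L\neq\Sigma^\ell$, and under that proviso some surviving state keeps an $\Omega$-transition and your minimality argument closes correctly.

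The genuine gap is the nondeterministic lower bound, exactly where you flag it. Your length argument controls only interior positions whose prefix length is not a multiple of $\ell$: at a boundary position $\ell, 2\ell, \dots$ an accepting run of $\aut{N}$ on a word of $L^j$, $j\ge2$, is under no obligation to visit $q^\ast$ --- or any final state at all, since nothing forces an NFA for $L^\star$ to thread its iterations through $I\cap F$. Hence the split automaton accepts $L$ together with, possibly, longer words of $L^\star$ (and also $\varepsilon$, if $I\cap F$ contains a second state you did not split), and $m\le k+1$ does not follow. The proposed repair is not available: the cover/rank machinery of \cref{sec:block} applies to acyclic, ranked NFAs of block languages, whereas every NFA for the infinite language $L^\star$ contains cycles, and any surgery forcing all iteration boundaries through $q^\ast$ may add states, which is precisely what this bound cannot afford. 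As it stands, $\nsc(L^\star)\ge m-1$ is unproven in your proposal --- and since the paper itself supplies no argument for it, it cannot be patched by citation; one would need a genuinely new idea for extracting a $(k+1)$-state NFA for $L$ from an arbitrary $k$-state NFA for $L^\star$, which the state-splitting alone does not accomplish.
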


Moreover, a \dfa $\aut{C}=\langle Q,\Sigma,\delta_2,\state_0,\{\state_f\}\rangle$ recognizes the language $L^+$ if $\delta_2(q_f, \letter) = \delta_0(q_0, \letter)$, for $\letter\in\Sigma$. Again, the \nfa for $L^+$ is given by applying the same changes to the minimal \nfa for $L$. And the witness languages coincide with the ones for finite languages, namely 
$L_{\ell} = \{a^{\ell}\}$ , for $\ell> 0$.
\begin{theorem}
	Let $L\subseteq \Sigma^\ell$, with $\ell>0$. Then, $\dsc(L^+) = \dsc(L)$ and $\nsc(L^+) =\nsc(L)$.
\end{theorem}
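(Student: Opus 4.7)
The plan is to establish the upper bounds by verifying the constructions sketched just before the theorem, and then to obtain matching lower bounds from the family $L_\ell = \{a^\ell\}$.

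For the upper bound $\dsc(L^+) \leq \dsc(L)$, I would check that $\aut{C}$ accepts exactly $L^+$. Given a factorization $w = w_1 \cdots w_n$ with each $w_i \in L$, the computation of $\aut{C}$ on $w$ reaches $q_f$ after reading $w_1$; the redirection $\delta_2(q_f,\sigma) = \delta_0(q_0,\sigma)$ then sends it, upon reading the first symbol of $w_2$, to the same state that $\aut{A}$ would have reached from $q_0$, so the remainder of $w_2$ drives the computation back to $q_f$, and iterating shows that $w$ is accepted. Conversely, any accepting computation of $\aut{C}$ can be segmented at its visits to $q_f$, and each segment corresponds to an accepting computation of $\aut{A}$ on a word of $L$; thus the accepted word factors as a concatenation of words of $L$. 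The identical argument applied to the nondeterministic construction, which also preserves the state set, yields $\nsc(L^+) \leq \nsc(L)$.

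For the lower bounds, I would take $L_\ell = \{a^\ell\}$ over an alphabet containing at least one symbol beyond $a$, for which $\dsc(L_\ell) = \ell + 2$ and $\nsc(L_\ell) = \ell + 1$ are immediate. Applying the constructions produces candidate automata for $L_\ell^+$ of the same respective sizes, and I would then verify their minimality. For the \dfa, the $\ell + 1$ chain states are pairwise separated by suitable powers of $a$ (the shortest suffix accepted from $q_i$ has length $\ell - i$, differing for different $i$), while the dead state remains reachable via the extra alphabet symbol and is trivially inequivalent. For the \nfa, a pigeonhole argument on any accepting computation of $a^\ell$ shows that at least $\ell + 1$ states are needed: any coincidence of two states on that path would yield, by shortcutting the cycle, an accepted word of length strictly less than $\ell$, contradicting that every word of $L_\ell^+$ has length at least $\ell$.

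The main obstacle will be the nondeterministic lower bound: a naive fooling set built from powers of $a$ collapses on the pair $(a^\ell, \varepsilon)$ against $(\varepsilon, a^\ell)$, since cross-multiplying yields $a^\ell \in L_\ell^+$ in both cases. The cycle-extraction argument sketched above sidesteps this issue by exploiting that any shorter cycle on the accepting path for $a^\ell$ would force acceptance of a strictly shorter word, which is impossible in $L_\ell^+$.
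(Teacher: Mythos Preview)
Your proposal is correct and follows the paper's approach: the construction $\aut{C}$ furnishes the upper bounds, and the family $L_\ell=\{a^\ell\}$ supplies the matching lower bounds, with your pigeonhole argument for the nondeterministic case making explicit what the paper leaves implicit. Your insistence on an alphabet with at least one extra symbol is in fact sharper than the paper's own table (which lists $|\Sigma|=1$ for this operation): over a strictly unary alphabet the dead state of $\aut{C}$ becomes unreachable, so $\dsc(L_\ell^+)=\ell+1<\ell+2=\dsc(L_\ell)$ and the deterministic witness would fail there.
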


\section{Conclusions} \label{section:conclusions}
The complexities obtained for operations on block languages are summarized in \cref{tab:cblock}. One can compare these results with the ones for finite languages summarized in \cref{tab:cfin}. For the deterministic state complexity, the bounds for Boolean operations on block languages are given using the rank widths and are smaller than the ones for finite languages. It would be interesting to express them as a function of the number of states of the operands (as it is usually done). Moreover, those bounds could be obtained from a direct construction of the minimal \dfa for the resulting language considering the bitmaps of the operands. Note that bitwise Boolean operations can be performed to obtain the bitmap factors (i.e., states) in each rank of the resulting \dfa. This study will be interesting to pursue in future work. For concatenation and Kleene star the bounds correspond to special cases of the ones for finite languages. Finally, for reversal the results are analogous to the ones for finite languages, but here considering the bounds known for the determinization of block languages. The results for nondeterministic state complexity meet the values known for finite languages except for intersection and the specific operations for block languages (block complement, word addition, and word removal).

\begin{table}
	\centering
	\caption{Upper bounds of the state complexity for block languages of words of length $\ell$.} \label{tab:cblock}
	\begin{tabular}{lcccc}
		\toprule
		& \multicolumn{2}{c}{Block Languages} \\
		\midrule
		& \multicolumn{1}{c}{sc} & \multicolumn{1}{c}{$|\Sigma|$} & \multicolumn{1}{c}{nsc} & \multicolumn{1}{c}{$|\Sigma|$} \\
		\midrule
		$L_1\cup L_2$ & $\sum_{i=1}^{\ell-1}(m_in_i+m_i+n_i)+3$ & $3$ & $m+n-2$ & $2$ \\
		$L_1\cap L_2$ & $\sum_{i=0}^{\ell}m_in_i+1$ & $2$ & $\sum_{i=0}^{\ell}m_in_i$ & $2$ \\
		$L_1L_2$ & $m+n-2$ & $1$ & $m+n-1$ & $1$ \\
		$\Sigma^\ell\setminus L$ & $m+\ell-1$ & $2$ & $O(2^{\sqrt{m}})$ & $2$ \\
		$L\cup\{w\}$ & $m+\ell-1$ & 2 & $m+\ell-1$ & 2 \\
		$L\setminus\{w\}$ & $m+\ell-1$ & 2 & $m+\ell-1$ & 2 \\
		$L^*$ & $m-1$ & $1$ & $m-1$ & $1$ \\
		$L^+$ & $m$ & $1$ & $m$& $1$ \\
		$\R{L}$ & $2^{\Theta(\sqrt{m})}$ & $2$ & $m$ & $1$ \\
		\bottomrule  
	\end{tabular}
\end{table}

\bibliographystyle{eptcs}
\bibliography{flan16}
\newpage
\appendix

\end{document}